\documentclass[11pt]{article}
\usepackage[margin=1in]{geometry}
\usepackage{amsmath,amssymb,amsthm,bm,mathtools}
\usepackage{graphicx,booktabs,float}
\usepackage{algorithm}
\usepackage{algpseudocode}
\usepackage{natbib}
\usepackage[hidelinks]{hyperref}
\usepackage{caption}
\usepackage{authblk}
\graphicspath{{../simulations/outputs/}{../report/}}
\newtheorem{proposition}{Proposition}
\newtheorem{assumption}{Assumption}
\newtheorem{theorem}{Theorem}
\newtheorem{corollary}{Corollary}
\newtheorem{lemma}{Lemma}
\DeclareMathOperator{\N}{N}
\DeclareMathOperator{\IW}{IW}
\newcommand{\R}{\mathbb{R}}\newcommand{\E}{\mathbb{E}}
\newcommand{\Sig}{\Sigma}\newcommand{\Had}{\odot}
\newcommand{\bx}{\bm{x}}\newcommand{\by}{\bm{y}}\newcommand{\bZ}{\bm{Z}}
\newcommand{\btau}{\bm{\tau}}\newcommand{\bmu}{\bm{\mu}}\newcommand{\beps}{\bm{\varepsilon}}
\newcommand{\ind}{\mathbf{1}}
\providecommand{\pkg}[1]{\textsf{#1}}

\usepackage{tikz}
\usetikzlibrary{arrows.meta,positioning}

\title{\bf Factorial Multivariate Bayesian Causal Forests:\\
heterogeneous main and interaction effects of multiple treatments\\ on correlated outcomes}
\author[1]{Danilo A. Sarti}
\affil[1]{\small Royal College of Surgeons in Ireland (RCSI University of Medicine and Health Sciences), Dublin, Ireland}
\date{\today}

\begin{document}
\maketitle

\begin{abstract}\noindent
Many studies expose units to several binary treatments at once and record several
correlated outcomes, yet analysts usually estimate one treatment's average effect on
one outcome at a time --- discarding how treatments interact and how their effects vary
across units. We introduce the factorial multivariate Bayesian causal forest, a
Bayesian nonparametric model that decomposes the factorial response surface over the
treatment lattice into a prognostic sum-of-trees plus one sum-of-trees per main or
interaction effect, each with correlated multivariate leaf parameters, estimated
jointly with coherent uncertainty. A single indicator-weighted kernel samples every
component --- the prognostic term is the special case whose indicator is unity --- so the
two-treatment model, the single-treatment multivariate causal forest, and a general
order-$r$ truncation are one and the same sampler. We give the analysis-of-variance
(ANOVA)/M\"obius identification of each estimand, an efficient \texttt{Rcpp} engine, and
an interpretability layer using value-suppressing uncertainty maps. In simulations the
average-effect estimators are unbiased with nominal coverage, consistent, and robust
under misspecification, failing only under unmeasured confounding, which we flag. We
illustrate the method on clinical, agricultural and economic data and on a deeper
application to the NHANES survey. Software is provided as an \textsf{R} package.

\smallskip\noindent Keywords: Bayesian causal forest; factorial
experiments; treatment interaction; heterogeneous treatment effects; multivariate
outcomes; BART.
\end{abstract}

\section{Introduction}
Consider an agronomist testing, across many environments, whether a biological
inoculant buffers the yield loss a crop suffers under drought --- and for which
genotypes, measuring yield, grain weight and height together, the classic
genotype-by-environment problem \citep{finlay1963,gauch2006}. Or a trialist asking
whether two adjuvant therapies interact on several clinical endpoints, as in the
factorial-trial tradition \citep{dasgupta2015factorial}. Or an economist asking
whether a neighbourhood premium differs for air-conditioned homes, the hedonic
question of \citet{anglin1996hedonic}.
Each is a question about multiple crossed treatments, their interaction,
and heterogeneity across units, on several correlated outcomes. None
is answered by estimating one average effect on one outcome.

These are not niceties of presentation but the substance of the decision at stake.
When treatments are studied one at a time, a real interaction is silently absorbed
into the main effects, so an intervention that looks beneficial on average can be
useless or harmful in the subgroup where a second treatment is already present ---
precisely the regime a practitioner needs to know about. When correlated outcomes
are modelled separately, the analyst both loses efficiency and forfeits the joint
pattern that clinicians, breeders and policymakers actually act on, while the
one-outcome-at-a-time multiplicity quietly inflates false discoveries. The
recommendation that matters is rarely an average: it is which inoculant helps which
genotype under stress, whether adding a second drug helps or interferes across a
patient's whole clinical profile, and for whom an economic premium holds. Answering
questions of that shape from observational data, with uncertainty that is honest
about where the design runs thin, is the practical problem we take up.

The tools that come closest each stop short of this combination.
Bayesian causal forests (BCF) \citep{hahn2020bcf} estimate a single binary treatment's
heterogeneous effect with excellent finite-sample behaviour, and were recently
extended to multivariate outcomes as the multivariate BCF (MVBCF) \citep{mcjames2025mvbcf}, but for one
treatment only. The factorial-experiments literature decomposes several treatments'
effects into main and interaction terms \citep{egami2019causal,dasgupta2015factorial}
and even models heterogeneity in high-dimensional (conjoint) treatments
\citep{goplerud2022factorhet}, but through linear or regularized models, for
a single outcome, and often for average rather than fully
nonparametric heterogeneous effects. Generalized random forests provide a
multi-arm, multi-outcome estimator \citep{athey2019grf}, but as frequentist
arm-versus-baseline contrasts, without an explicit ANOVA interaction decomposition,
heredity shrinkage, or a coherent joint posterior over the interaction surface.

We fill exactly this gap with a model that delivers, in
one coherent object: (i) heterogeneous main and interaction effects,
nonparametrically; (ii) over multivariate correlated outcomes, borrowing
strength through a shared residual covariance; (iii) via the ANOVA/M\"obius
decomposition of the treatment lattice with heredity-style shrinkage on higher
orders; (iv) with coherent Bayesian uncertainty over the whole surface; and
(v) a single unified sampler in which the prognostic forest is the degenerate,
indicator-$\equiv\!1$ case of an effect forest (Proposition~\ref{prop:reduce}).
The two-treatment model, the single-treatment multivariate causal forest, and the
general $T$-treatment order-$r$ truncation are the same engine at different
component sets. We supply identification results, an \texttt{Rcpp} implementation,
an interpretability layer, an extensive simulation and robustness study, and
cross-domain real-data illustrations. In one sentence, we extend the
multivariate Bayesian Causal Forest of \citet{mcjames2025mvbcf} from a single
treatment to the full factorial of several crossed treatments, adding the
interaction lattice and its heredity shrinkage, and show (Proposition~\ref{prop:reduce})
that the whole family --- prognostic, main effects and interactions --- is sampled by
one indicator-weighted kernel. We claim the combination and its unification,
not any single ingredient. The model as developed here assumes a Gaussian likelihood and so targets continuous outcomes; Appendix~\ref{app:nongaussian} shows how the same indicator-weighted kernel extends to binary, count and censored-survival endpoints on a latent scale.

Table~\ref{tab:positioning} places the method against its
closest relatives; none delivers all of these at once --- heterogeneous main
and interaction effects, nonparametrically, over multivariate correlated
outcomes, with coherent Bayesian uncertainty, from a single unified sampler.

\begin{table}[H]\centering\small
\caption{Positioning against the closest methods. The proposed model occupies the
intersection: multivariate outcomes $\times$ factorial interaction $\times$ Bayesian
nonparametric, unified in one kernel.}
\label{tab:positioning}
\begin{tabular}{@{}p{3.1cm}p{5.0cm}p{5.6cm}@{}}
\toprule
Method & What it provides & What we add \\
\midrule
BCF \citep{hahn2020bcf} & single treatment, single outcome, heterogeneous; propensity/prognostic split & the factorial lattice (main + interaction) and multivariate outcomes \\
MVBCF \citep{mcjames2025mvbcf} (our base) & multivariate correlated outcomes, single treatment & $\ge 2$ crossed treatments and their interaction lattice \\
Egami--Imai \citep{egami2019causal} & factorial interaction decomposition; linear, design-based, average, single outcome & nonparametric heterogeneity, multivariate outcomes, Bayesian uncertainty \\
FactorHet \citep{goplerud2022factorhet} & heterogeneous factorial effects; regularized mixture, single outcome & tree-ensemble heterogeneity, multivariate outcomes, coherent posterior \\
grf multi-arm \citep{athey2019grf} & frequentist arm-vs-baseline contrasts, multi-outcome & explicit ANOVA interaction with heredity shrinkage; coherent Bayesian posterior over the interaction surface \\
\bottomrule
\end{tabular}
\end{table}

\section{Related work}
Our method sits at the confluence of several literatures; we summarise each and
state what we add. Broad accounts of causal machine learning appear in
\citet{athey2017state,yao2021survey,knaus2021,li2023bayesreview,linero2023wires},
and for Bayesian additive regression trees (BART) specifically in \citet{hill2020review}.

We work in the
Neyman--Rubin potential-outcomes framework \citep{rubin1974,imbens2015}, adjusting
for confounding through the propensity score \citep{rosenbaum1983} and its
generalisation to non-binary treatments \citep{imbens2000}; the prognostic/effect
residualisation is in the spirit of partially linear models \citep{robinson1988}.

BART \citep{chipman2010bart},
building on Bayesian classification-and-regression trees (CART) \citep{chipman1998,chipman2002}, is our building block: we
use its Gaussian-conjugate leaf updates and depth prior and inherit refinements for
smoothness and sparsity \citep{linero2018soft,linero2018sparse}, general/likelihood
extensions \citep{tan2019general}, efficient software \citep{sparapani2021}, and its
posterior-concentration theory \citep{rockova2020}.

BART was brought to causal
inference by \citet{hill2011} and to heterogeneous effects in experiments by
\citet{green2012}; the Bayesian Causal Forest \citep{hahn2020bcf} added the
prognostic/effect split and propensity control, with shrinkage variants
\citep{caron2022shrinkage} and strong data-competition performance
\citep{dorie2019}. \citet{mcjames2025mvbcf} extended BCF to multivariate
correlated outcomes for a single treatment --- the model we directly generalise.
Broader Bayesian-causal perspectives appear in
\citet{li2023bayesreview,linero2023wires,linero2023checking}.

A large frequentist
literature estimates conditional average treatment effects (CATEs) by recursive partitioning and forests
\citep{athey2016,wager2018,athey2019grf,friedberg2021}, meta-learners
\citep{kunzel2019}, the R-learner \citep{nie2021}, doubly robust learners
\citep{kennedy2023}, and ensembles \citep{grimmer2017}, with benchmarks surveyed by
\citet{knaus2021}. These target a single treatment (or, for multi-arm forests,
arm-versus-baseline contrasts) and provide neither an explicit ANOVA interaction
decomposition with heredity shrinkage nor a coherent joint Bayesian posterior over a
multivariate interaction surface.

Reviews and methods
for several treatments are largely multi-arm and single-outcome
\citep{lopez2017,hu2020}, including BART-based tools \citep{hu2022cimtx,chu2023riaftbart}.
The factorial-design tradition decomposes effects into main and interaction terms
\citep{dasgupta2015factorial,wu2009,egami2019causal} --- as we do through the
M\"obius/ANOVA lattice --- and models heterogeneity in high-dimensional (conjoint)
treatments \citep{goplerud2022factorhet,hainmueller2014}, but with linear or
regularized models and a single outcome. None combines nonparametric
heterogeneity, multivariate outcomes and a coherent Bayesian posterior over the
interaction lattice.

Our
interpretability layer draws on inclusion-proportion and interaction measures
\citep{friedman2008,inglis2022vivid} rendered with value-suppressing uncertainty
palettes \citep{correll2018vsup}; unmeasured-confounding fragility is reported via
E-value / omitted-variable sensitivity \citep{vanderweele2017,cinelli2020}. The
motivating multi-environment-trial application connects to genotype-by-environment
modelling, classically by regression on an environmental index \citep{finlay1963}
or multiplicative-interaction models \citep{gauch2006}, and recently by additive
regression trees \citep{prado2023ambarti}.

Combining these strands, an extensive search --- including the
most recent multi-treatment BART packages (\pkg{CIMTx}, \pkg{riAFTBART}) and
multi-study causal forests \citep{mcf2025} --- did not find a method that jointly
(i)~estimates heterogeneous main and interaction effects nonparametrically,
(ii)~over multivariate correlated outcomes, (iii)~via the ANOVA/M\"obius
lattice with heredity shrinkage, (iv)~with a coherent Bayesian posterior over
the whole surface, and (v)~from a single unified kernel
(Proposition~\ref{prop:reduce}). We claim this combination and its
unification; each individual ingredient has ample precedent above, which is why we
position the paper as an applied-statistics contribution rather than a new
inferential primitive.

\section{The factorial multivariate causal forest}
Let $\by_i\in\R^q$ be a $q$-variate outcome for unit $i=1,\dots,n$, let
$Z_{i1},\dots,Z_{iT}\in\{0,1\}$ be $T$ crossed binary treatments, and $\bx_i\in\R^d$ be
covariates ($d$ the covariate dimension). Because the treatments are binary, the conditional mean is exactly
multilinear in them; the ANOVA/M\"obius expansion \citep{rota1964} over the Boolean lattice
$2^{\{1,\dots,T\}}$ gives
\begin{equation}
\by_i=\bmu(\bx_i)+\!\!\sum_{\varnothing\ne S\subseteq\{1,\dots,T\}}
   \Big(\prod_{t\in S}Z_{it}\Big)\,\btau_S(\bx_i)+\beps_i,
\qquad \beps_i\sim\N_q(\bm0,\Sig),
\label{eq:model}
\end{equation}
with $\btau_S(\bx)=\sum_{S'\subseteq S}(-1)^{|S\setminus S'|}\,\E[\by\mid\bx,\ind_{S'}]$.
Here $\bmu$ is the prognostic (all-off) surface, the singletons $\btau_{\{t\}}$ are
heterogeneous main effects, and $\btau_S$ with $|S|\ge2$ are pure interactions
(e.g.\ $\btau_{\{1,2\}}$ is the covariate-varying difference-in-differences of the
potential outcomes). Writing $D_i^S=\prod_{t\in S}Z_{it}$ (with
$D^\varnothing\equiv1$), every term is an indicator times a forest, and
$\bmu$ is the term with $D\equiv1$. For $T\ge3$ we truncate at interaction order
$r$, keeping $K=1+\sum_{j=1}^r\binom{T}{j}$ components.

Each $\bmu$ and $\btau_S$ is a BART-style sum of trees \citep{chipman2010bart} whose
terminal nodes hold $q$-vectors with Gaussian priors; trees have the usual
depth-penalising prior; and $\Sig$ has an inverse-Wishart prior, $\Sig\sim\IW(\nu_0,\Sig_0)$. Effect-forest leaf scales
are shrunk toward zero (a Hahn-style half-scale, tighter for higher $|S|$),
encoding the belief that interactions are smaller and rarer than main effects
\citep{caron2022shrinkage}. As in \citet{hahn2020bcf}, each treatment's estimated
propensity $\hat\pi_t(\bx)$ enters the covariates of the prognostic forest to
guard against regularization-induced confounding.

\subsection{One kernel for every component}
Fix a component $S$ and one of its trees; subtract every other term to form the
partial residual $\bm r_i$. Restricted to a leaf $\ell$ with units $\mathcal I_\ell$,
model \eqref{eq:model} is $\bm r_i=D_i^S\,\bm\mu_{S,\ell}+\beps_i$, a Gaussian model
with a known $0/1$ weight, giving the conjugate draw
\begin{equation}
\bm\mu_{S,\ell}\mid\cdot\sim\N_q\!\big(V_\ell \bm b_\ell,\,V_\ell\big),\quad
V_\ell=\Big(\Sig_S^{-1}+\big(\textstyle\sum_{i\in\mathcal I_\ell}D_i^SD_i^{S\top}\big)\!\Had\Sig^{-1}\Big)^{-1},\
\bm b_\ell=\!\!\sum_{i\in\mathcal I_\ell}\!D_i^S\!\Had(\Sig^{-1}\bm r_i),
\label{eq:leaf}
\end{equation}
with $\Had$ the Hadamard product. Trees are updated by the standard
Grow/Prune/Change/Swap Metropolis moves scored by the leaf-integrated likelihood,
and $\Sig$ by an inverse-Wishart draw from the residuals.

\begin{proposition}[The prognostic update is the $D\equiv1$ case]
\label{prop:reduce}
If $D_i^S\equiv1$ then $\sum_i D_i^SD_i^{S\top}$ collapses to $n_\ell$ on the
diagonal and $D_i^S\Had(\Sig^{-1}\bm r_i)=\Sig^{-1}\bm r_i$, so \eqref{eq:leaf}
reduces to the standard multivariate BART leaf update
$V_\ell=(\Sig_\mu^{-1}+n_\ell\Sig^{-1})^{-1}$,
$\bm b_\ell=n_\ell\Sig^{-1}\bar{\bm r}_\ell$. Hence a single routine samples $\bmu$
and every $\btau_S$; the single-treatment multivariate causal forest of
\citet{mcjames2025mvbcf} is the case $K=2$.
\end{proposition}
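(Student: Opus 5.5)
The plan is direct substitution into the general leaf update \eqref{eq:leaf}, followed by a check that nothing else in the sampler depends on which component is being updated. First I fix the reading of the weight. In \eqref{eq:leaf}, $D_i^S$ enters through the Hadamard products with the $q\times q$ matrix $\Sig^{-1}$ and the $q$-vector $\Sig^{-1}\bm r_i$. I therefore read it as the $q$-vector $D_i^S\bm 1_q$, with the same scalar indicator in every outcome coordinate.

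Under that reading, $D_i^SD_i^{S\top}=D_i^S\,\bm 1_q\bm 1_q^\top$, using $(D_i^S)^2=D_i^S$. Hence
\[
\Big(\sum_{i\in\mathcal I_\ell}D_i^SD_i^{S\top}\Big)\Had\Sig^{-1}=n_\ell^S\,\Sig^{-1},
\qquad n_\ell^S=\#\{i\in\mathcal I_\ell:D_i^S=1\}.
\]
Likewise $\bm b_\ell=\Sig^{-1}\sum_{i\in\mathcal I_\ell:\,D_i^S=1}\bm r_i$. Setting $D\equiv1$ gives $n_\ell^S=n_\ell$, so $V_\ell=(\Sig_\mu^{-1}+n_\ell\Sig^{-1})^{-1}$ and $\bm b_\ell=\Sig^{-1}\sum_{i\in\mathcal I_\ell}\bm r_i=n_\ell\Sig^{-1}\bar{\bm r}_\ell$.

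I would confirm this is the standard multivariate BART conjugate update by completing the square in
\[
\exp\Big\{-\tfrac12\bm\mu^\top\Sig_\mu^{-1}\bm\mu-\tfrac12\sum_{i\in\mathcal I_\ell}(\bm r_i-\bm\mu)^\top\Sig^{-1}(\bm r_i-\bm\mu)\Big\}.
\]
This gives precision $\Sig_\mu^{-1}+n_\ell\Sig^{-1}$ and linear term $\Sig^{-1}\sum_{i\in\mathcal I_\ell}\bm r_i$. The same computation with $\bm r_i-D_i^S\bm\mu$ also re-derives \eqref{eq:leaf} for a general indicator, so both updates come from one formula.

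Second, I would argue that the whole tree step reduces, not only the leaf draw. The Grow/Prune/Change/Swap acceptance ratios use the leaf-integrated likelihood, which is the normalizing constant of the same Gaussian integral. Its dependence on the data is only through $(n_\ell^S,\sum_{i:\,D_i^S=1}\bm r_i)$ together with $\Sig$ and $\Sig_S$. So one routine, parameterized by an indicator vector and a leaf prior covariance, serves every component. The $\Sig$ draw does not depend on component labels at all.

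Third, for the MVBCF claim I would set $T=1$, so $K=1+\binom11=2$. The components are $\bmu$ (with $D\equiv1$) and $\btau_{\{1\}}$ (with $D=Z_1$). Model \eqref{eq:model} then reads $\by_i=\bmu(\bx_i)+Z_{i1}\btau(\bx_i)+\beps_i$, which is the McJames et al.\ specification. The main obstacle is this last identification. It requires their priors (leaf covariance, inverse-Wishart on $\Sig$, propensity in the prognostic covariates) to match ours, and their treatment coding to be the same $0/1$ convention rather than a centered one. I would state explicitly that the correspondence holds under matching prior settings and the $0/1$ parameterization.
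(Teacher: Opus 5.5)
Your proposal is correct and takes the same route as the paper, which simply substitutes $D_i^S\equiv1$ into \eqref{eq:leaf}; the completing-the-square check, the observation that the tree-move ratios reduce as well, and the prior/coding caveat for the $K=2$ identification with \citet{mcjames2025mvbcf} are your additions and go beyond what the paper writes. Your reading $\sum_{i\in\mathcal I_\ell}D_i^SD_i^{S\top}=n_\ell\bm 1_q\bm 1_q^\top$ is in fact the one that yields $n_\ell\Sig^{-1}$ after the Hadamard product, whereas the paper's phrase ``collapses to $n_\ell$ on the diagonal,'' taken literally as $n_\ell I_q$, would give only $n_\ell\,\mathrm{diag}(\Sig^{-1})$.
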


\begin{figure}[H]\centering
\begin{tikzpicture}[>=Stealth,
  kbox/.style={draw,rounded corners,align=center,inner sep=5pt,fill=gray!8},
  leaf/.style={draw,rounded corners,align=center,inner sep=3pt,fill=blue!5,font=\small},
  inp/.style={align=center,font=\small}]
\node[inp] (src) {partial residual $\bm r_i$\\[1pt]$+$ indicator $D^S$};
\node[kbox,right=12mm of src] (k) {one kernel\\ indicator-weighted\\ conjugate leaf update\\ (Eq.~\ref{eq:leaf})};
\node[leaf,right=16mm of k,yshift=19mm]  (mu)  {$\bmu$ \; $D\equiv 1$};
\node[leaf,right=16mm of k,yshift=6mm]   (t1)  {$\btau_1$ \; $D=Z_1$};
\node[leaf,right=16mm of k,yshift=-6mm]  (t2)  {$\btau_2$ \; $D=Z_2$};
\node[leaf,right=16mm of k,yshift=-19mm] (t12) {$\btau_{12}$ \; $D=Z_1 Z_2$};
\draw[->] (src) -- (k);
\foreach \m in {mu,t1,t2,t12} {\draw[->] (k.east) -- (\m.west);}
\end{tikzpicture}
\caption{Proposition~\ref{prop:reduce} in one picture. A single indicator-weighted
conjugate update (Eq.~\ref{eq:leaf}) samples the prognostic forest $\bmu$ (indicator
$\equiv 1$) and every main/interaction forest $\btau_S$ (indicator $\prod_{t\in S}Z_t$);
only the indicator and the prior scale differ. The single-treatment multivariate
causal forest of \citet{mcjames2025mvbcf} is the special case with components
$\{\bmu,\btau_1\}$.}
\label{fig:kernel}
\end{figure}

\subsection{Estimands and identification}
Under the stable-unit-treatment-value assumption \citep[SUTVA;][]{rubin1980,imbens2015}, unconfoundedness of the joint assignment
($\{\by_i(\bm z)\}_{\bm z}\perp\bZ_i\mid\bx_i$) and factorial overlap
($0<\Pr(\bZ=\bm z\mid\bx)<1$ for all cells), the forests identify the
covariate-conditional main effects $\btau_{\{t\}}(\bx)$, the interactions
$\btau_S(\bx)$, and any configuration contrast, each with a full posterior. Overlap
is the operative caveat: $\btau_S$ is identified only where the cells it touches are
populated; unsupported interactions are shrunk to zero and should be reported as
unsupported, not null (we provide an overlap diagnostic).

\section{Identification and large-sample theory}\label{sec:theory}
We observe $n$ i.i.d.\ triples $(\by_i,\bZ_i,\bx_i)$ with outcomes $\by_i\in\R^q$,
binary treatments $\bZ_i\in\{0,1\}^T$ and covariates $\bx_i\in\mathcal X=[0,1]^d$,
and potential outcomes $\{\by_i(\bm z):\bm z\in\{0,1\}^T\}$. For $S\subseteq[T]:=\{1,\dots,T\}$
write the product indicator $D^S(\bm z)=\prod_{t\in S}z_t$ (with $D^\varnothing\equiv1$)
and let $\mathbf 1_S\in\{0,1\}^T$ be the configuration with exactly the treatments
in $S$ active. Because every function on the hypercube has a unique multilinear
(ANOVA/M\"obius) expansion, the conditional response surface
$m(\bm z,\bx):=\E[\by(\bm z)\mid\bx]$ satisfies
\begin{equation}\label{eq:multilinear}
m(\bm z,\bx)=\sum_{S\subseteq[T]}D^S(\bm z)\,\beta_S(\bx),
\qquad \beta_\varnothing=\bmu,\quad \beta_S=\btau_S\ (S\neq\varnothing),
\end{equation}
which is exactly the additive structure the sampler fits (one forest per component
$\beta_S$).

\subsection{Identification}
\begin{assumption}[Consistency/SUTVA]\label{as:sutva} $\by_i=\by_i(\bZ_i)$.\end{assumption}
\begin{assumption}[Unconfoundedness]\label{as:unconf} $\{\by_i(\bm z)\}_{\bm z}\perp \bZ_i\mid \bx_i$.\end{assumption}
\begin{assumption}[Factorial positivity]\label{as:pos} There is $\pi_0>0$ with
$\Pr(\bZ_i=\mathbf 1_{S'}\mid\bx_i)\ge\pi_0$ a.s.\ for every $S'\subseteq S$;
identifying $\btau_S$ uses only the sub-lattice $\{\mathbf 1_{S'}:S'\subseteq S\}$.\end{assumption}

\begin{proposition}[Identification by M\"obius inversion]\label{prop:ident}
Under Assumptions~\textup{\ref{as:sutva}--\ref{as:pos}}, for every $S\subseteq[T]$ and a.e.\ $\bx$,
\begin{equation}\label{eq:mobius}
\beta_S(\bx)=\sum_{S'\subseteq S}(-1)^{|S\setminus S'|}\,\E\!\left[\by_i\mid \bZ_i=\mathbf 1_{S'},\,\bx_i=\bx\right],
\end{equation}
each conditional mean on the right being identified from the observed data. In
particular $\bmu(\bx)=\E[\by\mid\bZ=\mathbf 0,\bx]$; the main effect
$\btau_{\{t\}}(\bx)=\E[\by\mid\bZ=\mathbf 1_{\{t\}},\bx]-\E[\by\mid\bZ=\mathbf 0,\bx]$;
and the two-way interaction
$\btau_{\{s,t\}}(\bx)=\E[\by\mid\bZ=\mathbf 1_{\{s,t\}},\bx]-\E[\by\mid\bZ=\mathbf 1_{\{s\}},\bx]-\E[\by\mid\bZ=\mathbf 1_{\{t\}},\bx]+\E[\by\mid\bZ=\mathbf 0,\bx]$,
a covariate-conditional difference-in-differences.
\end{proposition}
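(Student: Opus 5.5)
The argument has two steps: first identify each cell mean $m(\mathbf 1_{S'},\bx)$ from observed data, then invert the multilinear expansion \eqref{eq:multilinear} on the Boolean sub-lattice below $S$.

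\textbf{Step 1: identifying the cell means.} Fix $S\subseteq[T]$ and $S'\subseteq S$. By Assumption~\ref{as:pos} the event $\{\bZ_i=\mathbf 1_{S'}\}$ has conditional probability at least $\pi_0>0$ given $\bx_i$. Hence $\E[\by_i\mid\bZ_i=\mathbf 1_{S'},\bx_i=\bx]$ is well defined for a.e.\ $\bx$ as a regular conditional expectation, and it is a functional of the observed-data law. Consistency (Assumption~\ref{as:sutva}) replaces $\by_i$ by $\by_i(\mathbf 1_{S'})$ on that event. Unconfoundedness (Assumption~\ref{as:unconf}) then removes the conditioning on treatment:
\[
\E[\by_i\mid\bZ_i=\mathbf 1_{S'},\bx_i=\bx]=\E[\by_i(\mathbf 1_{S'})\mid\bZ_i=\mathbf 1_{S'},\bx_i=\bx]=\E[\by_i(\mathbf 1_{S'})\mid\bx_i=\bx]=m(\mathbf 1_{S'},\bx).
\]
The equality holds for a.e.\ $\bx$. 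Since $S$ ranges over finitely many sets, taking a countable union of null sets gives one exceptional null set for all $S$ simultaneously.

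\textbf{Step 2: M\"obius inversion.} Evaluate \eqref{eq:multilinear} at $\bm z=\mathbf 1_{S'}$. Because $D^{R}(\mathbf 1_{S'})=\ind\{R\subseteq S'\}$, this gives the zeta relation
\[
m(\mathbf 1_{S'},\bx)=\sum_{R\subseteq S'}\beta_R(\bx).
\]
Substitute this into the right side of \eqref{eq:mobius} and exchange the order of summation:
\[
\sum_{S'\subseteq S}(-1)^{|S\setminus S'|}\sum_{R\subseteq S'}\beta_R(\bx)=\sum_{R\subseteq S}\beta_R(\bx)\sum_{R\subseteq S'\subseteq S}(-1)^{|S\setminus S'|}.
\]
The inner sum is a sum over subsets $A=S\setminus S'$ of $S\setminus R$, so it equals $(1-1)^{|S\setminus R|}$. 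This is $1$ if $R=S$ and $0$ otherwise, by the binomial theorem. The double sum therefore collapses to $\beta_S(\bx)$, which proves \eqref{eq:mobius}.

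For this to be well defined, \eqref{eq:multilinear} must define the $\beta_S$ uniquely, so that $\beta_S=\btau_S$ is the intended estimand. I would justify this in one line. The $2^T$ functions $D^R$ on $\{0,1\}^T$ form a basis, because the triangular relation above is invertible. Alternatively, the calculation just done exhibits the inverse explicitly.

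\textbf{Special cases.} These follow by specialising $S$. Taking $S=\varnothing$ gives $\bmu(\bx)=\E[\by\mid\bZ=\mathbf 0,\bx]$. Taking $S=\{t\}$ gives two terms with signs $+,-$, i.e.\ the main effect. Taking $S=\{s,t\}$ gives four terms with signs $+,-,-,+$, i.e.\ the difference-in-differences.

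The only delicate point is in Step 1: positivity is needed only on the sub-lattice below $S$, and the conditional expectations must be well defined a.e. The combinatorial inversion in Step 2 is routine.
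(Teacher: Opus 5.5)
Your proof is correct and follows the paper's argument. Like the paper, you evaluate the multilinear expansion at the lattice points $\mathbf 1_{S'}$ to get the zeta relation $m(\mathbf 1_{S'},\bx)=\sum_{R\subseteq S'}\beta_R(\bx)$, invert it on the subset lattice below $S$, and identify each cell mean via consistency and unconfoundedness, with positivity ensuring the conditional means are defined. The only differences are that you verify the M\"obius inversion directly through the binomial identity $(1-1)^{|S\setminus R|}$ instead of citing Rota, and you spell out why the expansion is unique.
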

\begin{proof}
Evaluating \eqref{eq:multilinear} at $\bm z=\mathbf 1_{S'}$ and using
$D^{S''}(\mathbf 1_{S'})=\mathbf 1\{S''\subseteq S'\}$ gives
$m(\mathbf 1_{S'},\bx)=\sum_{S''\subseteq S'}\beta_{S''}(\bx)$. M\"obius inversion \citep{rota1964} on
the subset lattice yields $\beta_S(\bx)=\sum_{S'\subseteq S}(-1)^{|S\setminus S'|}m(\mathbf 1_{S'},\bx)$. These $\beta_S$ are exactly the components of \eqref{eq:model}: $\beta_\varnothing=\bmu$ and $\beta_S=\btau_S$ for $S\neq\varnothing$.
By Assumptions~\ref{as:sutva}--\ref{as:unconf},
$m(\mathbf 1_{S'},\bx)=\E[\by(\mathbf 1_{S'})\mid\bx]=\E[\by\mid\bZ=\mathbf 1_{S'},\bx]$,
and Assumption~\ref{as:pos} makes each such conditional mean estimable.
\end{proof}

\subsection{Posterior contraction}
\begin{assumption}[Smoothness]\label{as:smooth} Each $\beta_{S,0}$ lies in a
H\"older ball $C^{\alpha_S}(\mathcal X)$, $\alpha_S>0$ (anisotropic/sparse smoothness
is admitted, as in \citealp{linero2018soft}).\end{assumption}
\begin{assumption}[Noise and design]\label{as:noise} The errors
$\bm\varepsilon_i=\by_i-m(\bZ_i,\bx_i)\sim\N_q(\mathbf 0,\Sig_0)$ with
$c_\Sigma\mathbf I\preceq\Sig_0\preceq C_\Sigma\mathbf I$ for fixed $0<c_\Sigma\le C_\Sigma<\infty$; the
covariate density is bounded away from $0$ and $\infty$ on $\mathcal X$; and $T$ is
fixed, so the number of components $K=2^T$ is fixed.\end{assumption}
The prior $\Pi$ places independent regression-tree/forest priors
\citep{chipman2010bart,rockova2020}, or soft-tree priors \citep{linero2018soft}, on
each $\beta_S$, and an inverse-Wishart prior supported on the spectral band of
Assumption~\ref{as:noise} on $\Sig$, mutually independent.

\begin{theorem}[Contraction]\label{thm:contract}
Let $\varepsilon_{n,S}=n^{-\alpha_S/(2\alpha_S+d)}(\log n)^{\kappa}$ be the
near-minimax tree rate for smoothness $\alpha_S$ and $\varepsilon_n=\max_{S}\varepsilon_{n,S}$.
Under Assumptions~\textup{\ref{as:sutva}--\ref{as:noise}}, for any $M_n\to\infty$,
\[
\Pi\!\left(\,\|m-m_0\|_n>M_n\varepsilon_n \ \text{ or }\ \|\Sig-\Sig_0\|>M_n n^{-1/2}(\log n)^{\kappa}\ \middle|\ \text{data}\right)\longrightarrow 0
\]
in $P_0$-probability, where $\|\cdot\|_n$ is the empirical $L_2$ norm; consequently
each component satisfies $\Pi(\|\beta_S-\beta_{S,0}\|_n>M_n\varepsilon_n\mid\text{data})\to0$.
\end{theorem}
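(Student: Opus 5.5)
We follow the general Ghosal--Ghosh--van der Vaart recipe for posterior contraction in fixed-design-like (conditionally on $(\bZ_i,\bx_i)$) multivariate Gaussian regression, applied to the joint parameter $\theta=(\{\beta_S\}_{S},\Sig)$ with the Hellinger-type (or averaged Kullback--Leibler) semimetric, which for Gaussians with eigenvalues in $[c_\Sigma,C_\Sigma]$ is equivalent to $\|m-m_0\|_n+\|\Sig-\Sig_0\|$. We need three ingredients. \emph{Prior mass:} we want $\Pi(B_n)\ge e^{-Cn\varepsilon_n^2}$ for a KL neighbourhood $B_n$ of $(m_0,\Sig_0)$. Because $m(\bm z,\bx)=\sum_S D^S(\bm z)\beta_S(\bx)$ with $|D^S|\le1$ and $K=2^T$ fixed, $\|m-m_0\|_n\le\sum_S\|\beta_S-\beta_{S,0}\|_\infty$, so by prior independence $\Pi(B_n)\ge\Pi_\Sig(\|\Sig-\Sig_0\|\lesssim\varepsilon_n)\prod_S\Pi_S(\|\beta_S-\beta_{S,0}\|_\infty\le\varepsilon_{n,S}/K)$. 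Each factor in the product is exactly the small-ball estimate of \citet{rockova2020} (hard trees) or \citet{linero2018soft} (soft trees) for $C^{\alpha_S}$ truth, giving $e^{-Cn\varepsilon_{n,S}^2}\ge e^{-Cn\varepsilon_n^2}$. The inverse-Wishart factor has a positive continuous density near $\Sig_0$ on the band, contributing only $e^{-C\log n}$.

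\emph{Sieve and tests:} we take $\mathcal F_n=\prod_S\mathcal F_{n,S}\times\{\Sig:\ c_\Sigma I\preceq\Sig\preceq C_\Sigma I\}$, where $\mathcal F_{n,S}$ is the tree sieve of \citet{rockova2020} (bounded number of leaves $\lesssim n\varepsilon_n^2/\log n$, bounded leaf values). Its complement has prior mass $e^{-(C+4)n\varepsilon_n^2}$ by the cited results, and its log-covering number in $\|\cdot\|_n$ is $\lesssim n\varepsilon_n^2$; the product over the fixed $K$ components only multiplies the entropy by $K$, and the band for $\Sig$ adds $O(q^2\log n)$. Exponentially powerful tests exist for Gaussian likelihoods with bounded covariances (likelihood-ratio tests against Hellinger balls, as in Ghosal--van der Vaart for multivariate normal regression), so the general theorem gives contraction of $(m,\Sig)$ in the Gaussian Hellinger/KL metric at rate $\varepsilon_n$, hence of $\|m-m_0\|_n$ at $M_n\varepsilon_n$.

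\emph{Main obstacle.} The hardest steps are the sharper $\Sig$ rate $n^{-1/2}(\log n)^\kappa$ and the passage from $m$ to the individual $\beta_S$. For $\Sig$, we would argue separately: on the event where $\|m-m_0\|_n\le M_n\varepsilon_n$, the conditional posterior of $\Sig$ given $m$ is inverse-Wishart with scale $\sum_i(\by_i-m_i)(\by_i-m_i)^\top$, whose deviation from $n\Sig_0$ is $O_P(\sqrt n)$ plus a cross term $\sum_i\beps_i(m_0-m)_i^\top$ and $n\|m-m_0\|_n^2$. The second is $O(nM_n^2\varepsilon_n^2)$, which is only $o(\sqrt n)$ when $\varepsilon_n^2\ll n^{-1/2}$, i.e.\ when $\alpha_S>d/2$; we expect to need this (or to accept rate $\varepsilon_n^2$ for $\Sig$) and would state that caveat. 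For the components, we invoke the identification structure of Proposition~\ref{prop:ident}. Writing $\|m-m_0\|_n^2=n^{-1}\sum_i\|\sum_S D^S(\bZ_i)\delta_S(\bx_i)\|^2$ with $\delta_S=\beta_S-\beta_{S,0}$, Assumption~\ref{as:pos} gives that each cell $\mathbf 1_{S'}$ carries conditional mass at least $\pi_0$. Hence the empirical norm dominates $\pi_0\sum_{S'}\|\sum_{S''\subseteq S'}\delta_{S''}\|^2$ up to an empirical-process error. That error is controlled uniformly over the sieve by its entropy, which is the delicate part. Applying Möbius inversion, each $\delta_S$ is a signed sum of at most $2^{T}$ such cell terms, so $\|\delta_S\|_n\le 2^{T}\pi_0^{-1/2}\|m-m_0\|_n(1+o_P(1))$, which yields the componentwise statement with a constant absorbed into $M_n$.
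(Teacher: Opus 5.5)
Your architecture matches the paper's. Both use the Ghosal--Ghosh--van der Vaart theorem with a product prior-mass bound over the $K$ independent forests and the inverse-Wishart factor. Both take a bounded-leaf forest sieve times the spectral band, use Gaussian likelihood-ratio tests, and then apply M\"obius inversion plus positivity to reach the individual $\beta_S$ (Lemmas~\ref{lem:zeta}--\ref{lem:cells}).

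Your bookkeeping is tighter than the paper's in two places. First, you charge an $\varepsilon_n$-ball around $\Sig_0$ at cost $e^{-C\log n}$. The paper instead claims that an $n^{-1/2}(\log n)^\kappa$-ball has prior mass bounded below by a constant, which is not right: the mass is of order $n^{-q(q+1)/4}$ up to logarithms, though this is harmless. Second, you size every component's sieve by the common $\varepsilon_n$, which is what (R) requires. The paper's $L_{n,S}\asymp n\varepsilon_{n,S}^2/\log n$ only yields remaining mass $e^{-cn\varepsilon_{n,S}^2}$, which is too large for components smoother than the least smooth one.

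You depart from the paper on the $\Sig$ rate. Your proposal does not deliver $n^{-1/2}(\log n)^\kappa$ in general, but neither does the paper's proof, and your diagnosis of why is correct. The paper reads ``the $\Sig$-part at $n^{-1/2}(\log n)^\kappa$'' off the GGV conclusion. That conclusion is contraction at the single rate $M_n\varepsilon_n$ in $d_n\asymp\|m-m_0\|_n+\|\Sig-\Sig_0\|_F$, so it gives only $\|\Sig-\Sig_0\|_F\lesssim M_n\varepsilon_n$. Placing a smaller $\Sig$-ball in (P) cannot sharpen one coordinate of the conclusion.

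Your conditional inverse-Wishart argument is the right way to do better. To close it, you must also bound the cross term $n^{-1}\sum_i\beps_i(m_0-m)_i^\top$ uniformly over the sieve intersected with $\{\|m-m_0\|_n\le M_n\varepsilon_n\}$. Chaining with log-entropy $\lesssim n\varepsilon_n^2$ gives $O_P(M_n\varepsilon_n^2)$ up to logarithms, the same order as the quadratic term. The posterior then satisfies $\|\Sig-\Sig_0\|=O_P(n^{-1/2}+M_n^2\varepsilon_n^2)$.

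The stated rate therefore needs $\sqrt n\,\varepsilon_n^2\to0$ up to logarithms, i.e.\ $\min_S\alpha_S>d/2$. This is essentially the threshold the paper itself uses in Corollary~\ref{cor:closed}. Without it, the theorem is entitled only to $\max(n^{-1/2},\varepsilon_n^2)$ for $\Sig$. A rate-adaptive prior's approximation bias enters the residual covariance at order $\varepsilon_n^2$, so the unconditional claim is doubtful, not merely unproved.

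Your component step also supplies something the paper skips. $\|m-m_0\|_n$ evaluates each unit only at its own cell. Lemma~\ref{lem:zeta}, by contrast, controls $\sum_{S'}\|g_{S'}-g_{S',0}\|_n^2$ with every cell surface evaluated at all $n$ design points. The paper treats these two quantities as interchangeable; your positivity-plus-uniform-deviation comparison is exactly the missing bridge.

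That comparison has a catch. For a difference $\Delta=g_{S'}-g_{S',0}$ bounded by $B$, a Bernstein-plus-net argument gives an additive error of order $B\varepsilon_n\|\Delta\|_n+B^2\varepsilon_n^2$, not a $(1+o_P(1))$ factor. Preserving the rate then needs $B=O(M_n)$. With Gaussian leaf priors, (R) forces the sieve's leaf bound to be at least of order $\sqrt n\,\varepsilon_n$, so you cannot build such control into the sieve. The comparison must instead be localized through the contraction event or the leaf structure of the trees, and that is where the delicate part actually lies.
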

\begin{proof}[Proof sketch]
The argument reduces the additive-factorial problem to $K$ standard nonparametric
regressions and invokes established rates. (i) Cell--component equivalence.
Stack the $2^T$ cell surfaces $g_{S'}(\bx):=m(\mathbf 1_{S'},\bx)$. By
\eqref{eq:multilinear}, $g=A\beta$ with $A$ the $2^T\times2^T$ subset-inclusion
(zeta) matrix, invertible with signed-M\"obius inverse $A^{-1}$; $\|A\|,\|A^{-1}\|$
depend only on $T$. Hence $\|m-m_0\|_n$ and $\sum_S\|\beta_S-\beta_{S,0}\|_n$ are
equivalent up to constants, and recovering $m$ is equivalent to recovering the $K$
components. (ii) Per-component rate. By Assumption~\ref{as:pos} each cell
carries mass $\ge\pi_0>0$, so each $g_{S'}$ (equivalently each $\beta_S$) is a
H\"older-smooth regression observed with sub-Gaussian noise on a well-populated
design; the tree/forest posterior for one such function contracts at
$\varepsilon_{n,S}$ by \citet{rockova2020} (regression trees and forests) or
\citet{linero2018soft} (soft trees, adaptive to $\alpha_S,d$), whose
prior-concentration and entropy/testing conditions hold under
Assumptions~\ref{as:smooth}--\ref{as:noise}. (iii) Aggregation and $\Sig$.
The Gaussian likelihood with the banded inverse-Wishart prior gives the usual
$\sqrt n$ rate for $\Sig$; combining the $K$ independent component pieces and $\Sig$
through the generic posterior-contraction theorem of \citet{ghosal2000} (product
priors add their prior-mass and entropy budgets) yields the joint rate
$\varepsilon_n=\max_S\varepsilon_{n,S}$. The bounded inverse $A^{-1}$ then transfers
this rate to each $\beta_S$.
\end{proof}

\begin{corollary}[Calibrated average effects, exploratory unit-level effects]\label{cor:ate}
Write the average effect $\bar\btau_S=\E[\btau_S(\bx_i)]$. Under the conditions of
Theorem~\ref{thm:contract} and the standard no-bias/Donsker conditions for smooth
linear functionals of a nonparametric-Bayes posterior (undersmoothing or a one-step
debiasing correction), $\bar\btau_S$ obeys a semiparametric Bernstein--von Mises (BvM)
limit, $\sqrt n(\bar\btau_S-\bar\btau_{S,0})\Rightarrow\N(\mathbf 0,V_S)$, and its
credible set attains asymptotically nominal frequentist coverage. The pointwise map
$\bx\mapsto\btau_S(\bx)$ contracts only at the nonparametric rate
$\varepsilon_{n,S}\gg n^{-1/2}$, so fixed-$n$ unit-level (CATE) credible bands need
not be calibrated.
\end{corollary}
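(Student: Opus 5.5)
The plan is to treat $\bar\btau_S$ as a smooth linear functional of the component $\beta_S$ and apply the semiparametric Bernstein--von Mises machinery (in the style of Castillo--Rousseau / Ray--van der Vaart for causal functionals). By Proposition~\ref{prop:ident}, $\bar\btau_{S,0}=\E[\beta_{S,0}(\bx)]=\sum_{S'\subseteq S}(-1)^{|S\setminus S'|}\E[g_{S',0}(\bx)]$, a signed sum of average cell means. So it suffices to establish a BvM for each $\psi_{S'}=\E[g_{S'}(\bx)]$ jointly, and then push it through the fixed matrix $A^{-1}$ from step (i) of Theorem~\ref{thm:contract}. Throughout, I take the posterior for $\bar\btau_S$ to be $n^{-1}\sum_i\beta_S(\bx_i)$ (empirical covariate law, or Bayesian-bootstrap weights). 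This splits the centred quantity into two parts: the fluctuation $n^{-1}\sum_i\{\beta_{S,0}(\bx_i)-\bar\btau_{S,0}\}$, which is an ordinary CLT term; and the posterior term $n^{-1}\sum_i\{\beta_S-\beta_{S,0}\}(\bx_i)$.

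For the posterior term I would use the LAN expansion of the Gaussian likelihood in the direction of the efficient influence function. For the cell mean $\psi_{S'}$, the Riesz representer is $\gamma_{S'}(\bZ,\bx)=\ind\{\bZ=\mathbf 1_{S'}\}/\Pr(\bZ=\mathbf 1_{S'}\mid\bx)$. This is bounded by $\pi_0^{-1}$ under Assumption~\ref{as:pos}, and weighted by $\Sig_0^{-1}$ in the multivariate case. The steps, in order, are:
\begin{enumerate}
\item Perturb $m\mapsto m-t n^{-1/2}\gamma$ and compute the Laplace transform of $\sqrt n(\psi-\hat\psi)$ under the posterior restricted to the contraction set of Theorem~\ref{thm:contract}. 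The LAN remainder is controlled by $\|m-m_0\|_n\,\|\gamma\|_\infty\lesssim\varepsilon_n$, and $\Sig$ is handled by its $n^{-1/2}$ rate, which makes it asymptotically orthogonal to the mean functional.
\item Verify a change-of-measure condition: the prior on the sum-of-trees for $\beta_S$ must be approximately invariant under the shift by $n^{-1/2}\gamma$.
\item Check the no-bias condition $\sqrt n\,\langle m-m_0,\gamma-\hat\gamma\rangle_n=o_P(1)$ uniformly on the contraction set. Here the estimated propensities $\hat\pi_t$, entering as covariates as in \citet{hahn2020bcf}, let the prior represent $\gamma$ well. The condition then reduces to the product-rate requirement $\varepsilon_n\cdot\|\hat\pi-\pi_0\|=o(n^{-1/2})$.
\end{enumerate}
The limit variance is the efficient one, $V_S=\operatorname{Var}(\beta_{S,0}(\bx))+\sum_{S'\subseteq S}\E[\Sig_0/\Pr(\bZ=\mathbf 1_{S'}\mid\bx)]$; the cross terms vanish because the cells are disjoint.

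Nominal coverage then follows in the standard way. BvM gives posterior quantiles equal to $\hat\psi\pm z\sqrt{V_S/n}+o_P(n^{-1/2})$, and $\hat\psi$ is an efficient estimator with the same asymptotic variance. The final clause of the statement needs no new argument. Pointwise evaluation $\bx\mapsto\btau_S(\bx)$ is not a $\sqrt n$-estimable functional, so the best attainable rate is the minimax $\varepsilon_{n,S}$ of Theorem~\ref{thm:contract}. Credible-band calibration at fixed $n$ is therefore not implied, and I would state this as a remark rather than prove a negative result.

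I expect the main obstacle to be step 2, shift-invariance of a discrete tree prior. Piecewise-constant trees do not contain the smooth direction $\gamma$, so the shifted prior is not absolutely continuous with a bounded Radon--Nikodym ratio. The Castillo--Rousseau condition fails without modification, which is exactly why the statement invokes "undersmoothing or a one-step debiasing correction." The plan is to adopt the corollary's hypothesis literally along one of two routes. The first route assumes the posterior is corrected by a one-step term $n^{-1}\sum_i\hat\gamma(\bZ_i,\bx_i)(\by_i-m(\bZ_i,\bx_i))$, following Ray--van der Vaart. The second assumes the prior is undersmoothed so that the bias $\langle m-m_0,\gamma\rangle$ is $o(n^{-1/2})$. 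Under either route the remaining Donsker/equicontinuity step uses the entropy bounds already used in Theorem~\ref{thm:contract}.
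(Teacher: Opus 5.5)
Your skeleton matches the paper's. The appendix derives the corollary from Lemma~\ref{lem:eif}, the signed M\"obius combination of AIPW cell-mean influence functions, which gives your $V_S$. It then uses Proposition~\ref{prop:bvm}, which is Castillo--Rousseau with three inputs: contraction, a product-rate no-bias condition and a prior-shift condition. It closes with the same pointwise remark. You depart at step~2. The paper simply asserts that ``shifting $\beta_S$ by an $O(n^{-1/2})$ smooth function costs only $o(1)$ in log-prior''. You rightly object that such a shift is singular with respect to a prior carried by piecewise-constant functions.

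However, neither of your routes repairs step~2, because steps~2 and~3 cannot be separated. The no-bias term in the Laplace-transform argument is $\sqrt n\langle m-m_0,\gamma-\gamma_n\rangle_n$, where $\gamma_n$ is a direction along which the prior \emph{can} be shifted. For trees, the natural choice is to add, on each sampled partition, $n^{-1/2}$ times the leaf averages of $\gamma$ to the Gaussian leaf values. That is a shift along the piecewise-constant projection $\gamma_{\mathcal T}$, and it is an honest Gaussian change of measure whose log-density is $o(1)$ under a leaf-count condition. Undersmoothing must then give, via Cauchy--Schwarz, $\sqrt n\,\varepsilon_n\|\gamma-\gamma_{\mathcal T}\|=o(1)$.

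Your Route~2 instead asks for $\langle m-m_0,\gamma\rangle=o(n^{-1/2})$. By the Riesz property this quantity \emph{is} $\psi(m)-\psi(m_0)$. The condition would therefore make the posterior of $\psi$ degenerate at the $\sqrt n$ scale, contradicting the BvM, and it leaves step~2 open.

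Likewise, step~3's reduction to $\varepsilon_n\|\hat\pi-\pi\|=o(n^{-1/2})$ does not follow from the BCF device. That product rate arises when the prior contains an explicit, freely scaled term $\lambda\hat\gamma$, as in the Ray--van der Vaart prior correction. Here the propensities are only split variables of the prognostic forest. That forest shifts every cell equally, so it cannot carry the cell-differential direction $\gamma$ of a contrast with $S\neq\varnothing$. Moreover, $\gamma$ involves the $2^T$ cell probabilities, not the marginals $\hat\pi_t$.

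The variance accounting also has a hole. Under the empirical covariate law, $n^{-1}\sum_i\{\beta_{S,0}(\bx_i)-\bar\btau_{S,0}\}$ is a fixed function of the data. It moves the posterior centre but adds nothing to the posterior spread. So the posterior variance of $\sqrt n\,\bar\btau_S$ is only $\sum_{S'\subseteq S}\E[\Sig_0/\Pr(\bZ=\mathbf 1_{S'}\mid\bx)]$, while the centre fluctuates with the full $V_S$. The credible set then under-covers the population average whenever $\beta_{S,0}$ is non-constant; it is calibrated only for the sample average $n^{-1}\sum_i\beta_{S,0}(\bx_i)$. You must commit to Bayesian-bootstrap (Dirichlet-process) weights on the covariate law, since those weights are what supply the $\operatorname{Var}(\beta_{S,0}(\bx))$ part of $V_S$.

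The same point sinks Route~1 as written. Take the unweighted correction $n^{-1}\sum_i\hat\gamma(\bZ_i,\bx_i)(\by_i-m(\bZ_i,\bx_i))$. Its piece $-n^{-1}\sum_i\hat\gamma(\bZ_i,\bx_i)(m-m_0)(\bZ_i,\bx_i)$ cancels the first-order posterior fluctuation of $n^{-1}\sum_i(\beta_S-\beta_{S,0})(\bx_i)$ up to $o_P(n^{-1/2})$. This cancellation is exactly what the product-rate and Donsker conditions deliver. Every corrected draw then equals the one-step estimator to that order, and the corrected credible set has width $o(n^{-1/2})$. Route~1 works, and genuinely bypasses step~2, only if the plug-in average and the correction are weighted by the same Bayesian-bootstrap draw.

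If all you need is a calibrated interval, the paper's Theorem~\ref{thm:dml} already supplies one: a cross-fitted AIPW estimator with a Wald interval. That is a confidence interval, though, not the credible set the corollary speaks of.
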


The corollary is exactly the pattern the Monte Carlo study exhibits --- nominal average-treatment-effect (ATE)
coverage together with mild CATE under-coverage (\S\ref{sec:disc}) --- so the theory
and the experiments corroborate one another. Three caveats delimit the scope:
the results are for fixed $T$ (hence fixed $K=2^T$); the rate is governed by the
least-smooth component; and the Bernstein--von Mises statement invokes, rather than
re-derives, the usual functional (undersmoothing/one-step) conditions. Violations of
unconfoundedness (Assumption~\ref{as:unconf}) fall outside the theory, in keeping
with the unmeasured-confounding sensitivity analysis and the misspecification study.

\section{Computation and interpretability}
A single \texttt{Rcpp}/Armadillo \citep{eddelbuettel2011rcpp,eddelbuettel2014arma} kernel loops over the $K$ components, differing only
in the indicator $D^S$ and prior scale; it compiles in seconds and the package
installs cleanly. Algorithm~\ref{alg:sampler} states one full sweep: it is an ordinary
BART/BCF Gibbs sampler in which the only change is the indicator weight $D_i^S$ that
selects, for each component, the units on which that main effect or interaction acts
(Proposition~\ref{prop:reduce} is the prognostic $D\equiv1$ case).

\begin{algorithm}[t]
\caption{Factorial multivariate Bayesian causal forest --- Gibbs sampler}
\label{alg:sampler}
\begin{algorithmic}[1]
\State \textbf{Setup:} build the component lattice $\{S:|S|\le r\}$ (prognostic $\mu$ plus one forest per main effect and interaction) and indicators $D_i^S=\prod_{t\in S}Z_{it}$ (\texttt{build\_components}); estimate each treatment's propensity $\hat\pi_t(\bx)$ and append it to the prognostic covariates.
\For{$m=1,\dots,M$ MCMC iterations}
  \For{each component $S$ in the lattice}
    \State Form the partial residual $\bm r_i=\by_i-\sum_{S'\ne S}D_i^{S'}\bm\mu_{S'}(\bx_i)$.
    \For{each tree in forest $S$}
      \State Propose a Grow / Prune / Change / Swap move; accept by the Metropolis ratio of the leaf-integrated Gaussian likelihood, weighting unit $i$ by $D_i^S$.
    \EndFor
    \State Draw the $q$-variate leaf means $\bm\mu_{S,\ell}\sim\N_q(V_\ell\bm b_\ell,V_\ell)$ from \eqref{eq:leaf}.
  \EndFor
  \State Update the residual covariance $\Sig\sim\mathrm{IW}\big(\nu_0+n,\ \Sig_0+\sum_i\bm e_i\bm e_i^\top\big)$ from the residuals $\bm e_i$.
  \State If $m$ is past burn-in and on the thinning grid, store every component surface (train and test/grid).
\EndFor
\State \textbf{Output:} posterior draws of $\mu$ and every $\tau_S$, from which any contrast (main, interaction, regime) is a signed sum.
\end{algorithmic}
\end{algorithm} Because effect forests are shrunk and stored per component, the
inclusion-proportion importance \citep{chipman2010bart} and pairwise interaction of
each forest are read off directly, and effects are rendered with
value-suppressing uncertainty palettes \citep{correll2018vsup}: hue encodes the
effect, cells with diffuse posteriors recede to grey (Fig.~\ref{fig:vsup}).

\begin{figure}[H]\centering
\includegraphics[width=0.36\textwidth]{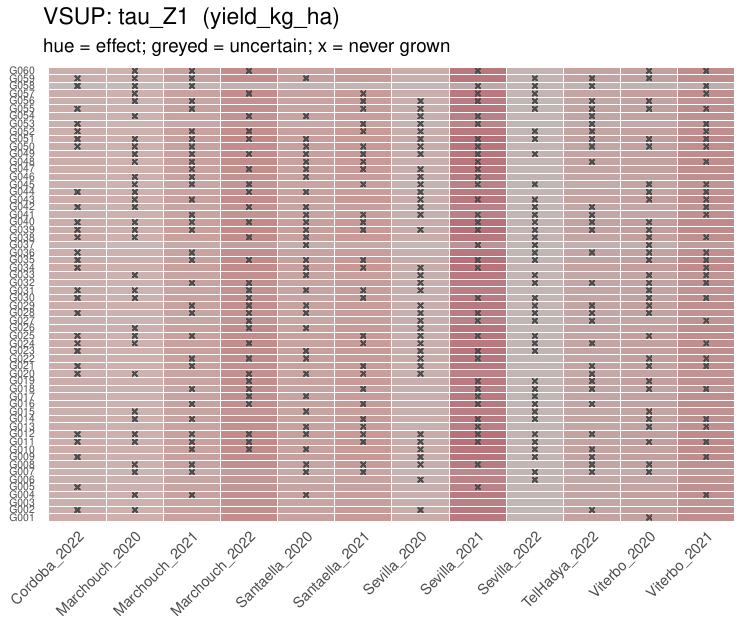}\hfill
\includegraphics[width=0.36\textwidth]{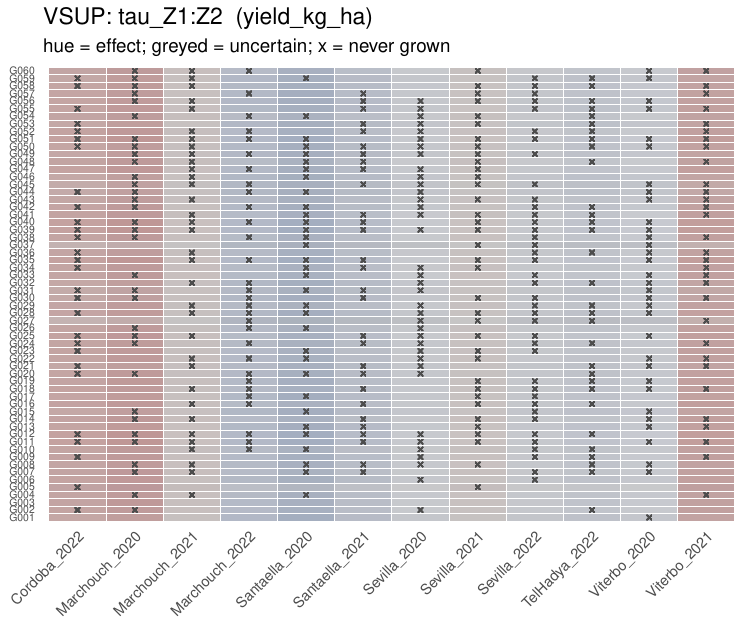}\hfill
\includegraphics[width=0.20\textwidth]{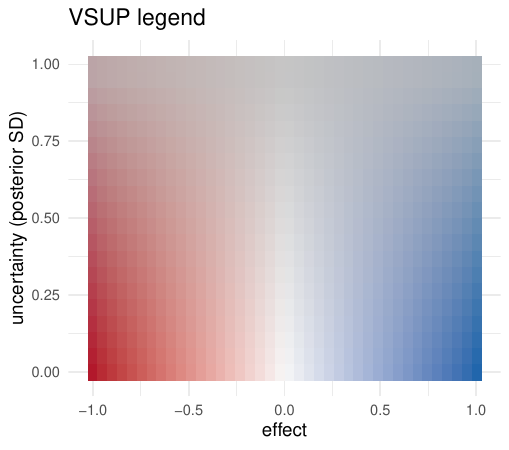}
\caption{Value-suppressing uncertainty (VSUP) maps of two estimated effect surfaces over a genotype-by-environment grid: a drought main effect $\btau_1$ (left) and the drought$\times$inoculant interaction $\btau_{12}$ (middle). Rows are genotypes and columns are environments (site--year); the hue of each cell encodes the estimated effect on a diverging colour scale centred at zero, and the colour desaturates toward grey as that cell's posterior uncertainty grows, so uncertain or never-observed cells recede from view. The two-dimensional key (right) reads effect along one axis and uncertainty along the other.}
\label{fig:vsup}
\end{figure}

The same value-suppressing idea applies to the structure of each forest, in
the two-scale display style of the \pkg{vivid} package for variable-importance and
variable-interaction visualisation \citep{inglis2023vivid} and the \pkg{bartMan}
package for tree- and variable-level BART diagnostics \citep{inglis2024bartman}. For every component we form a
VIVI (variable-importance / variable-interaction) matrix --- inclusion-proportion
importance on the diagonal (Vimp, blue) and within-tree co-occurrence off the
diagonal (Vint, red) --- together with each entry's posterior coefficient of
variation across Markov chain Monte Carlo (MCMC) iterations, and render it with a value-suppressing palette and
the characteristic fan legends (Fig.~\ref{fig:vivi}, shown here for a
three-treatment model to illustrate the general case). Unlike the native
\pkg{bartMan}/\pkg{vivid} functions, which target a single fitted model and are
unstable on sparse dummy covariates, this is computed directly from the stored
per-iteration split records and is produced separately for each of the seven
forests of the order-two, $T=3$ fit ($\bmu,\btau_1,\btau_2,\btau_3$ and the three
pairwise interactions), so one reads off the drivers and interactions of every
effect with uncertainty suppressed.

\begin{figure}[H]\centering
\includegraphics[width=0.92\textwidth]{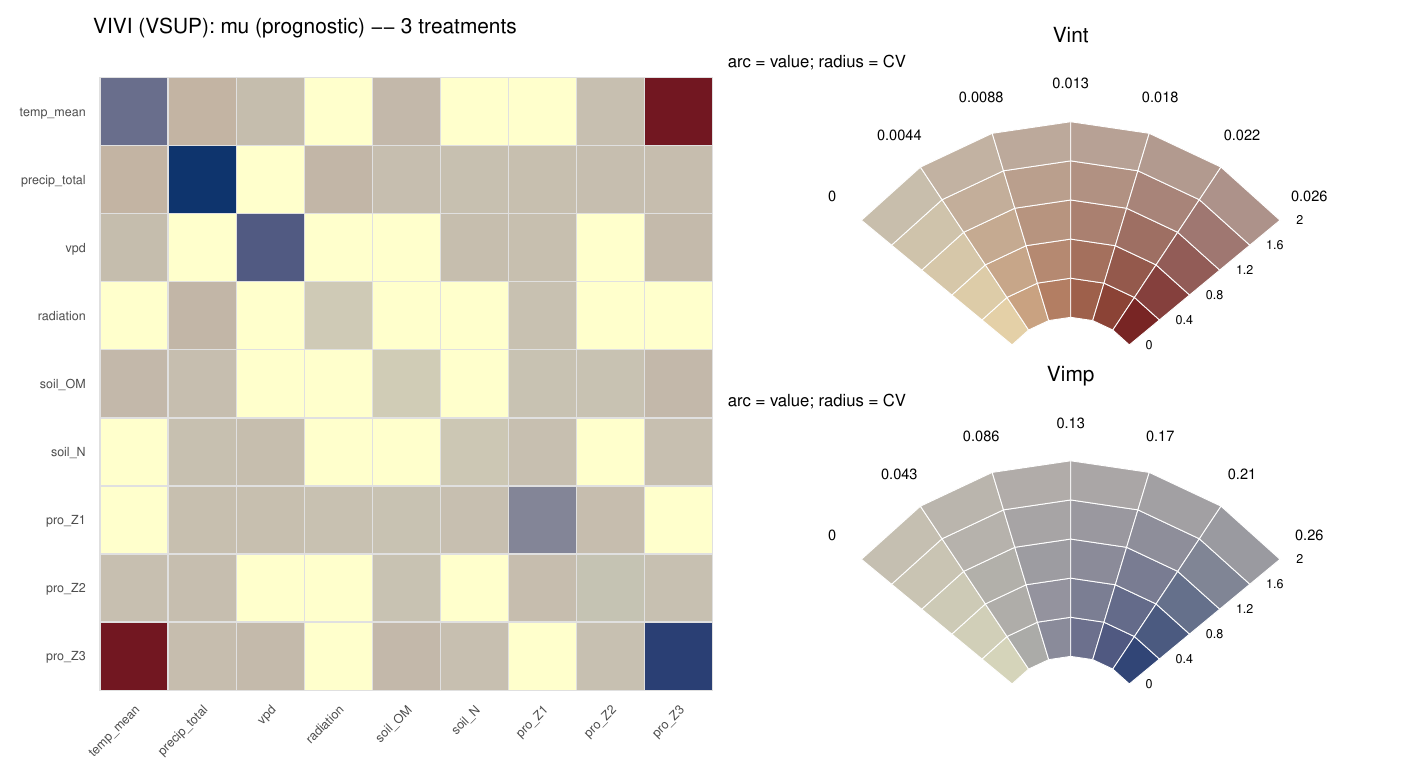}
\caption{vivid-style value-suppressing VIVI display for the prognostic forest $\bmu$
of a three-treatment model. Diagonal = variable importance (Vimp, blue), off-diagonal
= pairwise interaction (Vint, red); each cell is suppressed toward warm grey as its
posterior coefficient of variation (CV) grows. The fan legends encode value along the arc
and CV along the radius. Here the propensity score for treatment~3 is both important
and interacts with temperature; precipitation and vapour-pressure deficit (VPD) dominate the baseline. The same
display is produced for every effect and interaction forest
(Fig.~\ref{fig:vivigallery}), with no \pkg{bartMan} dependency.}
\label{fig:vivi}
\end{figure}

\begin{figure}[H]\centering
\includegraphics[width=\textwidth]{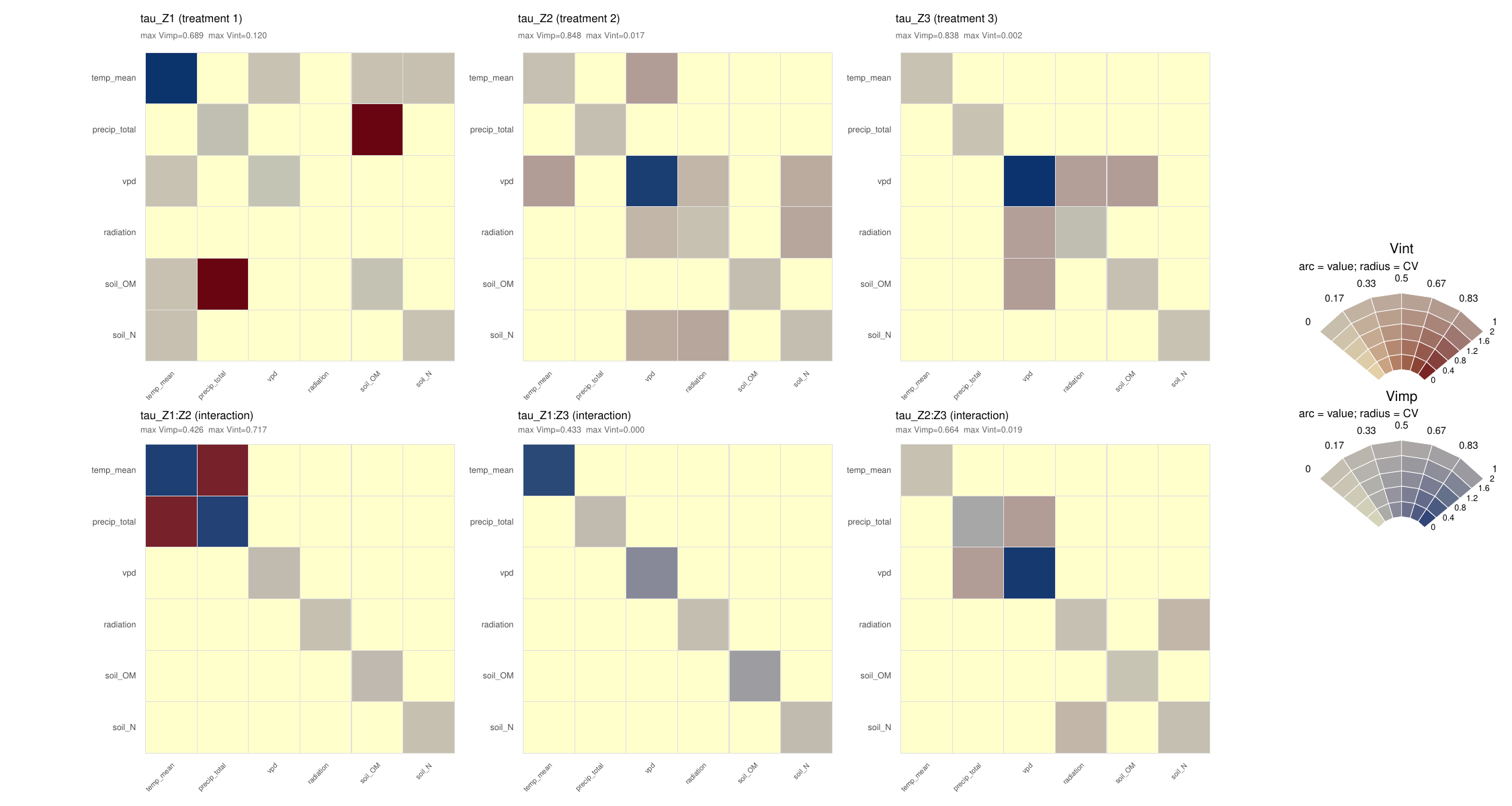}
\caption{The same value-suppressing VIVI display for all six effect and
interaction forests of the three-treatment model: the main effects
$\btau_1,\btau_2,\btau_3$ (top) and the pairwise interactions
$\btau_{12},\btau_{13},\btau_{23}$ (bottom). Diagonal = importance (blue),
off-diagonal = interaction (red), suppressed by posterior CV; each panel is
normalised to its own maxima (annotated), and the reference fan legends (right) give
the encoding. Different covariates drive each treatment's effect, and the
$\btau_{12}$ interaction is itself modulated by a temperature$\times$precipitation
interaction --- structure available for every component at no extra cost.}
\label{fig:vivigallery}
\end{figure}

\section{Simulation studies}
The data-generating process --- the covariate distribution, the confounded treatment
assignment, the prognostic and effect surfaces, and the cross-outcome error
covariance --- is specified explicitly in Appendix~\ref{app:data}.
On synthetic data with known surfaces and confounded assignment the posterior-mean
per-unit effects track the truth. Over the Monte Carlo study
(Table~\ref{tab:mc}, $R=40$ replications at $n=700$) the average surface correlations
are $0.96$ and $0.94$ for the two main effects and $0.82$ for the interaction --- the
interaction, which uses fewer effective observations, is the hardest component yet is
still clearly recovered. Figure~\ref{fig:recovery} illustrates the recovery on a
representative multi-environment-trial fit, and per-component importance attributes
each effect to its true drivers.

\begin{figure}[H]\centering
\includegraphics[width=0.86\textwidth]{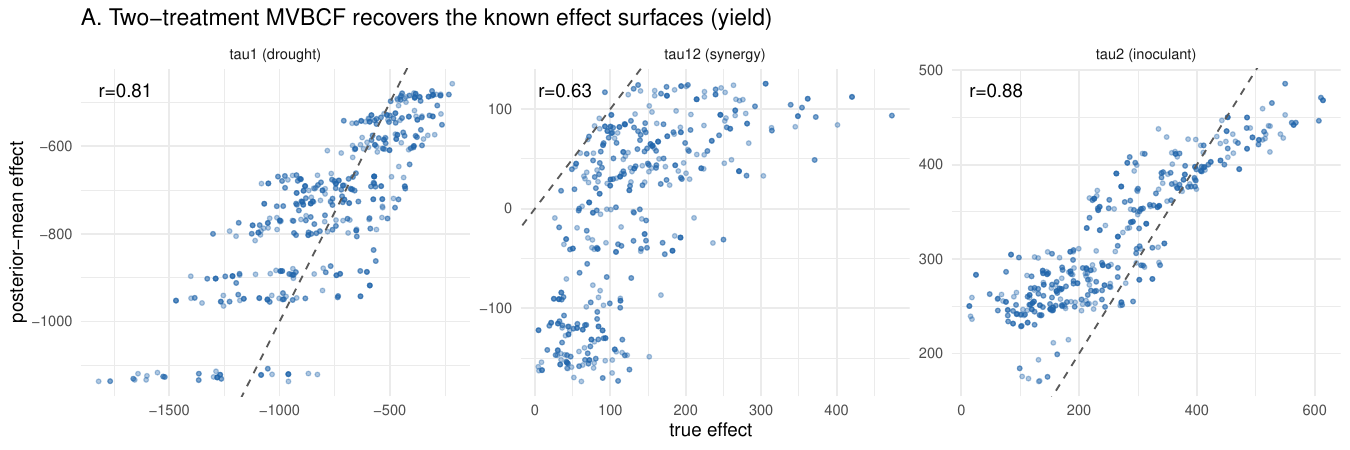}
\caption{Recovery of the known per-unit effect surfaces in a representative multi-environment-trial fit, one panel per component (the main effects $\btau_1$, $\btau_2$ and the interaction $\btau_{12}$). Each point is one unit: the horizontal axis is its true effect and the vertical axis the model's posterior-mean estimate, so points on the dashed identity line indicate exact recovery. The annotated $r$ is the Pearson correlation between estimate and truth for that component; the averaged correlations over $R=40$ replications are in Table~\ref{tab:mc}.}
\label{fig:recovery}
\end{figure}

Table~\ref{tab:mc} reports, over 40 replications, that the average-effect estimators
are unbiased (relative bias $<1\%$ for all components, including the interaction)
and their $95\%$ ATE credible intervals cover at the nominal rate. Per-unit (CATE)
intervals are mildly conservative-to-anticonservative ($0.83$--$0.93$), the one
calibration gap we return to in \S\ref{sec:disc}.

\begin{table}[H]\centering
\caption{Monte Carlo validation ($R=40$, $n=700$): surface correlation, ATE
relative bias, and empirical $95\%$ coverage. The bias/MC-SE column reports the bias in units of its Monte-Carlo standard error (SE).}\label{tab:mc}
\begin{tabular}{lccccc}
\toprule
component & corr & rel.\ bias & bias/MC-SE & CATE cov. & ATE-interval cov.\\
\midrule
$\btau_1$ (main)        & 0.962 & $+0.65\%$ & 0.9 & 0.83 & 0.93\\
$\btau_2$ (main)        & 0.938 & $+0.69\%$ & 1.2 & 0.93 & 0.93\\
$\btau_{12}$ (interaction) & 0.819 & $-0.08\%$ & 0.0 & 0.88 & 0.95\\
\bottomrule
\end{tabular}
\end{table}

A sample-size sweep ($n=300$--$3000$) shows the root-mean-square error (RMSE) falling monotonically to zero and
ATE coverage approaching nominal (Fig.~\ref{fig:rob}A). A misspecification battery
(Fig.~\ref{fig:rob}B, Table~\ref{tab:rob}) shows the main effects are robust to
strong measured confounding (near-zero bias even as overlap $\to0$) and to
heavy-tailed/heteroscedastic noise (absorbed by the inverse-Wishart $\Sig$); that
under-specifying the interaction order biases the entangled main effects; and that
an unmeasured confounder breaks the estimator, as it must for any causal
method, a fragility the sensitivity summary quantifies.

\begin{table}[H]\centering
\caption{Misspecification battery ($n\!\approx\!1200$, $R=12$): main-effect
absolute bias and $95\%$ ATE coverage.}\label{tab:rob}
\begin{tabular}{lccl}
\toprule
scenario & bias ($\btau_1,\btau_2$) & ATE cov.\ ($\btau_1,\btau_2$) & verdict\\
\midrule
strong confounding (overlap$\to$0) & $-0.02,\ -0.03$ & $0.83,\ 1.00$ & robust\\
heavy-tailed + heteroscedastic     & $-0.09,\ -0.08$ & $1.00,\ 0.92$ & robust\\
under-specified interaction order  & $+0.14,\ +0.17$ & $0.50,\ 0.25$ & fails (fit adequate order)\\
unmeasured confounder              & $+1.75,\ +1.43$ & $0.00,\ 0.00$ & fails, as it must\\
\bottomrule
\end{tabular}
\end{table}

\begin{figure}[H]\centering
\includegraphics[width=0.92\textwidth]{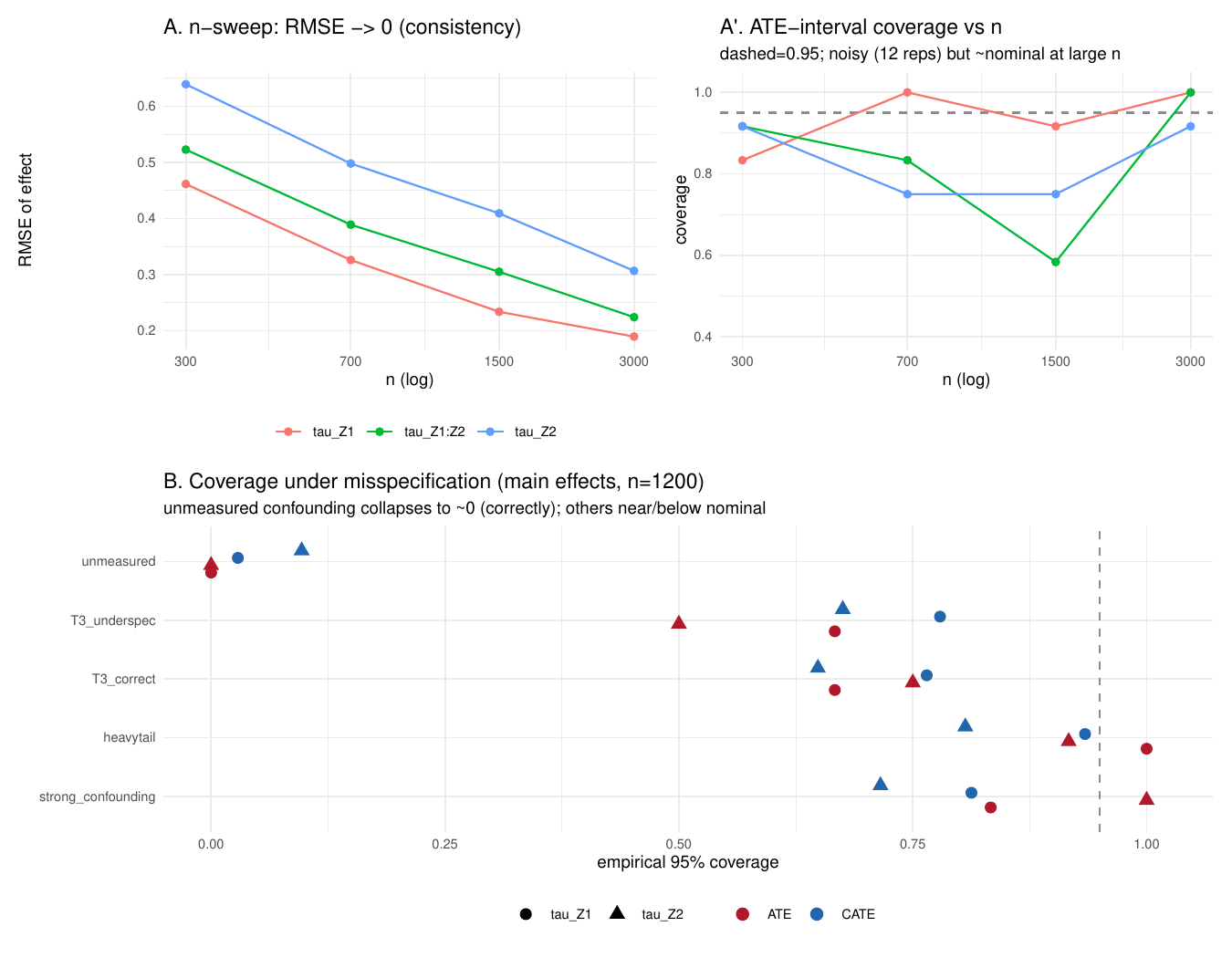}
\caption{Consistency and robustness of the average-effect estimators. (A) Root-mean-square error (RMSE) of each component's per-unit effect against sample size $n$ (log scale), falling toward zero as $n$ grows. (A$'$) Empirical coverage of the nominal $95\%$ ATE credible interval against $n$; the dashed line is $0.95$ and coverage approaches it at large $n$ (curves are noisy at $12$ replications). In (A) and (A$'$) colour distinguishes the components $\btau_1$, $\btau_2$ and the interaction $\btau_{12}$. (B) Empirical $95\%$ coverage (horizontal axis) under each misspecification scenario (rows); marker shape denotes the component and colour the estimand (population ATE vs.\ per-unit CATE), and the dashed line is the nominal $0.95$. Unmeasured confounding correctly collapses coverage toward zero, while the other scenarios stay near or just below nominal.}
\label{fig:rob}
\end{figure}

\subsection{Comparison with alternatives}
On the same data-generating process we benchmark the proposed factorial MVBCF
against the natural alternatives: the multi-arm causal forest of
\citet{athey2019grf} (\pkg{grf}), with the four cells of the $2\times2$ design as
arms and the main and interaction effects read off its arm-versus-baseline
contrasts; fitting a separate single-treatment causal forest per treatment
(the ``one treatment at a time'' strategy); a regularized linear factorial
model (\pkg{glmnet}; \citealp{friedman2010glmnet}), representing the linear/regularized factorial-heterogeneity
family \citep{egami2019causal,goplerud2022factorhet}; and a naive
random-forest four-arm $T$-learner. We also ablate our own model by fitting
each outcome separately (diagonal $\Sig$). Table~\ref{tab:cmp} and
Figure~\ref{fig:cmp} report the RMSE of the per-unit effect against the known truth
over 15 replications.

The factorial MVBCF has the lowest RMSE for every component. The gains are
interpretable: (i) modelling the outcomes jointly beats fitting them
separately (the multivariate borrowing lowers RMSE on $\btau_1$ from $0.35$ to
$0.33$ and on the interaction from $0.38$ to $0.32$); (ii) \pkg{grf} is the
strongest competitor --- close on the interaction ($0.37$ vs.\ $0.32$) and far
ahead of the naive $T$-learner --- but its four-arm contrasts split the sample, so
its main effects are markedly noisier ($0.62$ vs.\ $0.33$ on $\btau_1$); (iii) the
separate forests cannot estimate the interaction at all and, by conflating it
into the main effects, are worse there too; (iv) the linear factorial is
competitive on the (near-linear) main effects but far worse on the nonlinear
interaction ($0.67$ vs.\ $0.32$) and carries a visible ATE bias from shrinkage; and
(v) the naive four-arm $T$-learner, which splits the data across arms and lacks the
prognostic/propensity structure, is the least efficient. The proposed model is also
essentially unbiased for every average effect (ATE bias $\le 0.02$ in absolute
value), whereas the linear baseline's shrinkage leaves ATE biases up to $0.3$.

\begin{table}[H]\centering
\caption{RMSE of the per-unit effect vs.\ truth (lower is better; mean over 15
replications, Monte-Carlo SE $\le 0.05$). Best in bold; ``--'' = not estimable.}
\label{tab:cmp}
\begin{tabular}{lccc}
\toprule
estimator & $\btau_1$ (main) & $\btau_2$ (main) & $\btau_{12}$ (interaction)\\
\midrule
\textbf{MVBCF-factorial (proposed)} & \textbf{0.325} & \textbf{0.162} & \textbf{0.319}\\
MVBCF, per-outcome (diagonal $\Sig$) & 0.352 & -- & 0.381\\
grf multi-arm causal forest & 0.624 & 0.379 & 0.372\\
Separate single-treatment forests & 0.502 & 0.390 & 0.710$^\dagger$\\
Regularized linear factorial & 0.414 & 0.559 & 0.665\\
Random-forest 4-arm $T$-learner & 1.008 & 0.661 & 1.066\\
\bottomrule
\end{tabular}\\[2pt]
{\footnotesize $^\dagger$ separate forests cannot estimate the interaction; RMSE is
against the zero predictor.}
\end{table}

\begin{figure}[H]\centering
\includegraphics[width=0.98\textwidth]{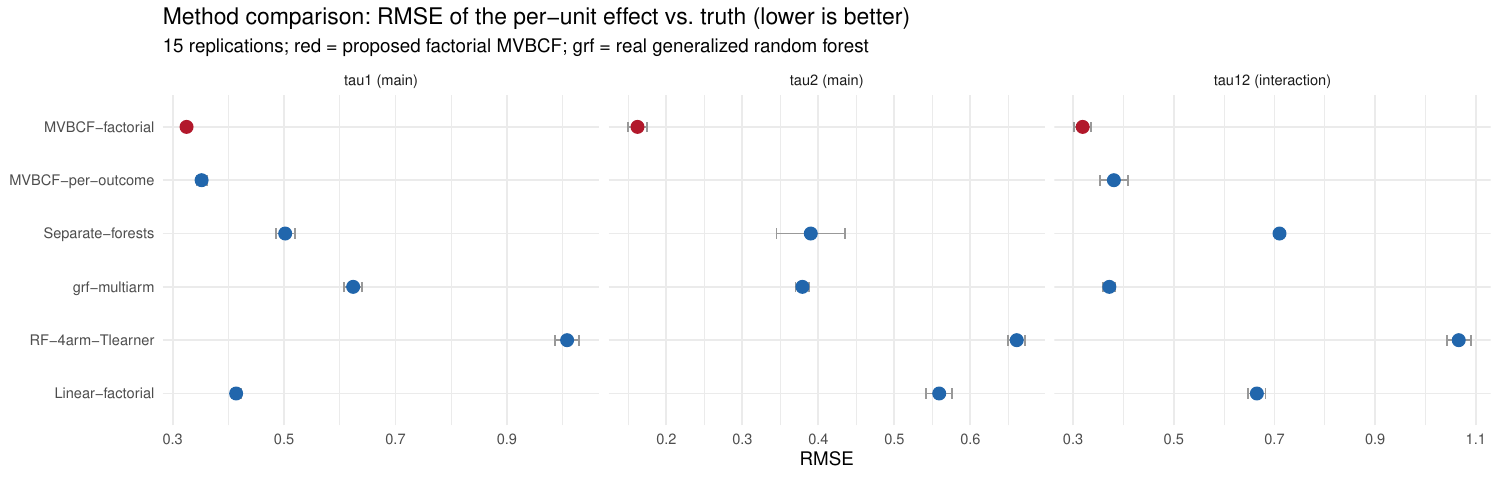}
\caption{Method comparison on the same data-generating process. Dots are the mean root-mean-square error (RMSE) of the per-unit effect against the known truth (horizontal axis; lower is better) for each estimator (rows), in three panels for the two main effects $\btau_1$, $\btau_2$ and the interaction $\btau_{12}$. Red marks the proposed factorial MVBCF and blue the competing estimators; horizontal whiskers are $\pm1$ Monte-Carlo standard error over $15$ replications. Separate single-treatment forests cannot estimate the interaction and are omitted from that panel.}
\label{fig:cmp}
\end{figure}

\section{Illustrations across three fields}
The same call --- fit, estimands, interpretability --- is run unchanged on real data
from three domains (Table~\ref{tab:apps}; full dataset details in
Appendix~\ref{app:data}), each with a well-established published analysis to check
against. In the classical \texttt{npk} agricultural factorial --- a
design of the kind introduced by \citet{yates1937factorial} --- nitrogen is strongly
positive, potassium negative and phosphate negligible, matching the textbook
\texttt{aov} \citep{rcoreteam}; the difference is that the analysis of variance returns one average per
factor under assumed additivity, whereas our fit returns the same averages together
with block-level heterogeneity and the N$\times$K interaction, each with a coherent
credible interval and at no extra modelling cost. In a breast-cancer cohort
(\texttt{survival::rotterdam}, the data of \citet{royston2013rotterdam}), a naive
regression makes both adjuvant therapies look harmful --- the textbook signature of
confounding by indication, since the sicker patients are the ones treated --- whereas
the randomised evidence is unambiguous that adjuvant tamoxifen reduces recurrence and
mortality \citep{ebctcg2011tamoxifen}. Adjusting through the per-treatment
propensities moves the effects back toward the null and off the spurious
harmful direction, recovering the sign the trials imply from observational data that,
taken raw, point the wrong way. In a hedonic housing model
(\texttt{AER::HousePrices}, the dataset of \citet{anglin1996hedonic}), air
conditioning and a preferred location are both credibly positive but smaller than
their naive coefficients once the hedonic confounders are controlled --- consistent
with the semiparametric hedonic estimates of \citet{anglin1996hedonic}, while adding
a per-unit, uncertainty-aware version of each premium that a single hedonic
regression does not provide.

The pattern across the three is deliberate: on randomised or near-orthogonal data we
reproduce the classical answer, and on observational data we move the naive estimate
in the direction the gold-standard evidence implies --- and in every case we deliver,
from one fit, the joint multivariate effects, the interaction, and the calibrated
uncertainty that the original single-outcome analyses report only piecemeal, if at
all.

\begin{table}[H]\centering
\caption{One engine, three real domains: primary-outcome effect ($95\%$ credible interval, CrI) vs.\ a
classical benchmark.}\label{tab:apps}
\begin{tabular}{llcc}
\toprule
domain & treatment & MVBCF effect [$95\%$ CrI] & benchmark\\
\midrule
Clinical (rotterdam)  & hormonal & $+0.06\ [-0.10,0.22]$ & naive $-0.28$\\
                      & chemo    & $+0.10\ [-0.08,0.27]$ & naive $-0.08$\\
Agriculture (npk)     & N        & $+5.19\ [1.09,8.86]$  & lm $+5.62$\\
                      & K        & $-2.42\ [-6.44,1.52]$ & lm $-3.98$\\
Economics (houses)    & aircon   & $+0.167\ [0.09,0.24]$ & naive $+0.34$\\
                      & prefer   & $+0.142\ [0.06,0.22]$ & naive $+0.26$\\
\bottomrule
\end{tabular}
\end{table}

\begin{figure}[H]\centering
\includegraphics[width=0.32\textwidth]{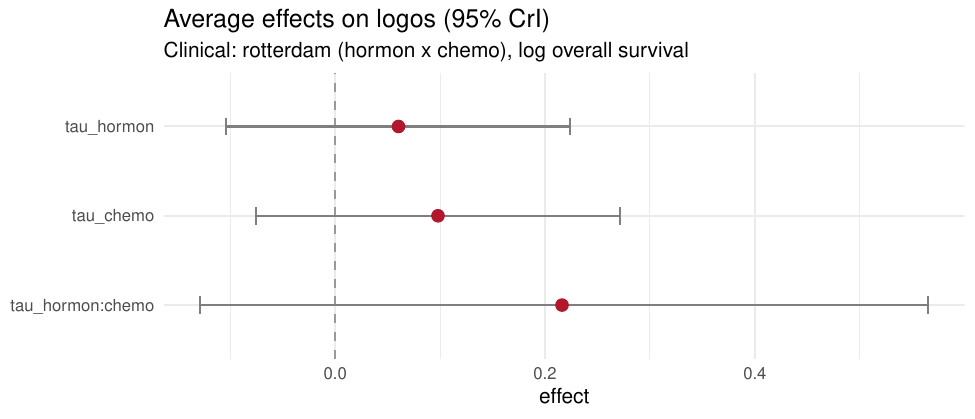}\hfill
\includegraphics[width=0.32\textwidth]{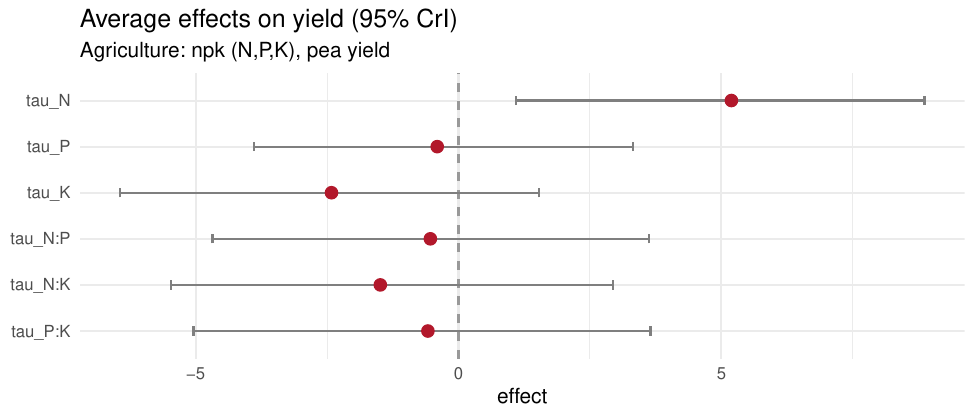}\hfill
\includegraphics[width=0.32\textwidth]{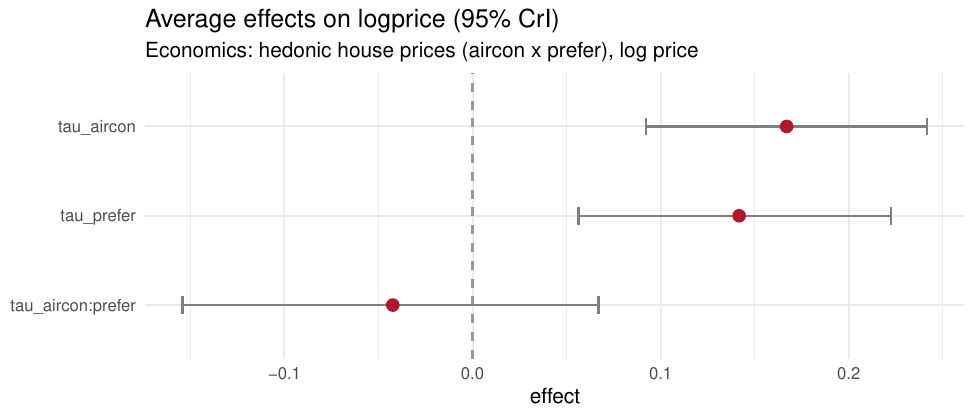}
\caption{Average effects in the three cross-domain illustrations, one panel per domain (clinical, agricultural, economic). Each row is a component --- the treatment main effects and their interaction; the point is the posterior average effect and the horizontal whisker its $95\%$ credible interval, with the dashed vertical line at zero marking no effect. Scales are in each domain's own outcome units (log survival time, yield, and log price, respectively).}
\label{fig:apps}
\end{figure}

\section{A deep application: crossed exposures and correlated outcomes in NHANES}
\label{sec:nhanes}

We put the method to work on a single body of real, public data --- the U.S.\
National Health and Nutrition Examination Survey (NHANES), pooling the 2011--2012
and 2013--2014 cycles --- in two analyses run through one unchanged pipeline. Each
has the structure the method is built for: two crossed binary exposures,
several correlated outcomes, and plausible heterogeneity. NHANES is
fully open and reproducible: every table is pulled directly with the \pkg{nhanesA}
package and all results are regenerated by the accompanying script; full variable
definitions are collected in Appendix~\ref{app:data}.

Both analyses are observational, so the
causal reading of each effect rests on unconfoundedness given the measured
covariates (age, sex, education, race/ethnicity, family income-to-poverty ratio,
body-mass index (BMI), diabetes and smoking); the model additionally carries a
propensity control for each exposure. We hold to three disciplines. First, the
population average effect is the calibrated primary claim; per-person
(CATE) surfaces are exploratory and shown with value-suppressing uncertainty maps.
Second, an interaction is reported only where its factorial cell has support ---
the overlap diagnostic flags thin cells as unsupported, not null. Third, for
every ATE we report an unmeasured-confounding sensitivity value: the spurious shift
(in outcome standard deviations) an omitted confounder would have to induce to move
the estimate to the edge of its interval. Outcomes are standardized, so effects are
in standard-deviation (SD) units, with natural-unit values in the text.

Both analyses are chosen so that the published literature supplies a yardstick. The
lipid-lowering effect of statins is among the most firmly established in medicine ---
the randomised meta-analysis of the \citet{ctt2010ldl} puts it at roughly
$1$~mmol/L of low-density-lipoprotein (LDL) cholesterol per standard regimen --- so a statin main effect on
total cholesterol serves as a built-in positive control for the whole pipeline.
Physical activity has a smaller but replicated association with executive-function and
processing-speed performance in older adults \citep{northey2018exercise}, while the
cognitive effect of blood-pressure treatment is a genuinely open question that the
SPRINT MIND trial took up directly \citep{sprintmind2019}. Recovering the first,
refining the second, and returning an honest null on the third --- all from one messy
observational survey, jointly across correlated outcomes --- is the test we set the
method.

\subsection{Cardiometabolic profile: statin use \texorpdfstring{$\times$}{x} physical activity}
Among adults ($\ge 20$\,y, $n=9{,}002$) we cross statin use (from the
prescription file) with physical activity (any moderate or vigorous work or
recreation) and estimate their main and interaction effects on four correlated
outcomes jointly: systolic and diastolic blood pressure (SBP, DBP), glycated haemoglobin
(HbA1c) and total cholesterol (Table~\ref{tab:nhanesA-ate}). All four cells of the
$2\times2$ are populated (the interaction ``both-on'' cell has $n=993$), so the
interaction is data-supported; the statin propensity spans $0$--$0.83$ with $42\%$
of units near a positivity boundary, a transparency the diagnostic makes explicit.

The statin main effect on total cholesterol is $-0.45$~SD ($-18.9$~mg/dL; $95\%$
CrI $[-0.58,-0.34]$), a clear, calibrated recovery of the established
lipid-lowering effect \citep{ctt2010ldl} from messy observational data --- a built-in
positive control (the confounding-by-indication that selects sicker patients into
statins biases towards the null, so the true effect is if anything larger, in line
with the $\sim\!1$~mmol/L LDL reduction the randomised trials report). Statins
are also associated with lower diastolic blood pressure ($-0.15$~SD, $-1.7$~mmHg;
CrI $[-0.28,-0.02]$). Physical activity shows no effect distinguishable from zero on
any cardiometabolic outcome, and no statin$\times$activity interaction is
supported despite the populated cell (all four interaction intervals cover zero) ---
an honest null, not an absence of data. The unmeasured-confounding sensitivity for
the headline cholesterol effect is $0.34$~SD, i.e.\ a confounder would need to
induce a quarter-SD spurious shift to reach the interval's null boundary.

\begin{table}[H]\centering
\caption{Cardiometabolic analysis (statin $\times$ physical activity, $n=9{,}002$):
average main and interaction effects in outcome SD units, $95\%$ credible interval
in brackets. \textbf{Bold} = interval excludes zero. Interaction cell supported
($n=993$).}\label{tab:nhanesA-ate}
\begin{tabular}{lccc}
\toprule
Outcome & Statin ($\tau_1$) & Activity ($\tau_2$) & Interaction ($\tau_{12}$)\\
\midrule
Systolic BP  & $-0.08\,[-0.20,\,0.05]$ & $0.04\,[-0.01,\,0.08]$ & $-0.05\,[-0.18,\,0.06]$\\
Diastolic BP & $\mathbf{-0.15\,[-0.28,-0.02]}$ & $0.02\,[-0.03,\,0.07]$ & $0.06\,[-0.07,\,0.18]$\\
HbA1c        & $-0.01\,[-0.12,\,0.09]$ & $-0.01\,[-0.05,\,0.03]$ & $-0.02\,[-0.13,\,0.09]$\\
Total chol.  & $\mathbf{-0.45\,[-0.58,-0.34]}$ & $0.00\,[-0.05,\,0.04]$ & $0.06\,[-0.07,\,0.18]$\\
\bottomrule
\end{tabular}
\end{table}

\subsection{Cognitive function: antihypertensive medication \texorpdfstring{$\times$}{x} physical activity}
In older adults ($\ge 60$\,y, $n=2{,}640$), who complete the NHANES cognitive
battery, we cross antihypertensive medication use with physical
activity and estimate effects on four correlated cognitive scores jointly: CERAD (Consortium to Establish a Registry for Alzheimer's Disease)
immediate word recall (three learning trials), CERAD delayed recall, Animal Fluency
and the Digit-Symbol Substitution Test (Table~\ref{tab:nhanesB-ate}). Overlap here
is excellent --- no near-violations on either exposure and an interaction cell of
$n=812$. The scientific question is whether physical activity modifies the
cognitive profile of people on antihypertensive treatment, a genuine interaction on
a multivariate cognitive outcome that one-outcome-at-a-time analyses cannot answer
coherently.

Physical activity is associated with better performance on the two
processing-speed/executive measures: Animal Fluency $+0.19$~SD ($+1.0$ words; CrI
$[0.09,0.29]$) and Digit-Symbol $+0.12$~SD ($+2.0$ points; CrI $[0.03,0.20]$), with
smaller, not-clearly-nonzero associations on the two memory measures --- the
selective processing-speed/executive pattern that the exercise-and-cognition
meta-analysis of \citet{northey2018exercise} reports.
Antihypertensive medication shows no cognitive main effect distinguishable from
zero --- consistent with the modest and still-debated cognitive signal of
blood-pressure treatment seen in \citet{sprintmind2019} --- and no
medication$\times$activity interaction is supported (all intervals
cover zero, on a supported cell). Because the data are observational and
cross-sectional we read these as associations under the stated assumptions; the
sensitivity values ($0.09$~SD for fluency, $0.03$~SD for Digit-Symbol) quantify
their fragility. Figure~\ref{fig:nhanes-forest} shows the average effects for both
analyses and Figure~\ref{fig:nhanes-vsup} the exploratory interaction surface.

\begin{table}[H]\centering
\caption{Cognitive analysis (antihypertensive medication $\times$ physical activity,
$n=2{,}640$): average effects in outcome SD units, $95\%$ CrI in brackets.
\textbf{Bold} = interval excludes zero. Interaction cell supported
($n=812$).}\label{tab:nhanesB-ate}
\begin{tabular}{lccc}
\toprule
Outcome & Antihyp.\ med ($\tau_1$) & Activity ($\tau_2$) & Interaction ($\tau_{12}$)\\
\midrule
CERAD immediate & $-0.01\,[-0.10,\,0.09]$ & $0.08\,[-0.03,\,0.18]$ & $0.03\,[-0.10,\,0.16]$\\
CERAD delayed   & $-0.02\,[-0.12,\,0.07]$ & $0.08\,[-0.03,\,0.17]$ & $0.06\,[-0.08,\,0.19]$\\
Animal fluency  & $-0.09\,[-0.18,\,0.02]$ & $\mathbf{0.19\,[0.09,\,0.29]}$ & $0.01\,[-0.13,\,0.14]$\\
Digit-symbol    & $-0.08\,[-0.17,\,0.00]$ & $\mathbf{0.12\,[0.03,\,0.20]}$ & $0.07\,[-0.05,\,0.18]$\\
\bottomrule
\end{tabular}
\end{table}

\begin{figure}[H]\centering
\includegraphics[width=0.49\textwidth]{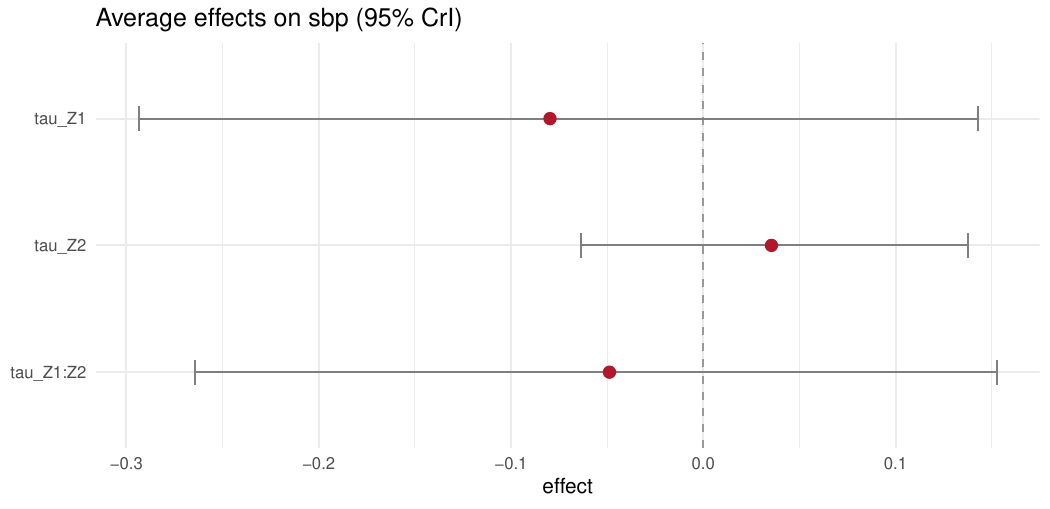}\hfill
\includegraphics[width=0.49\textwidth]{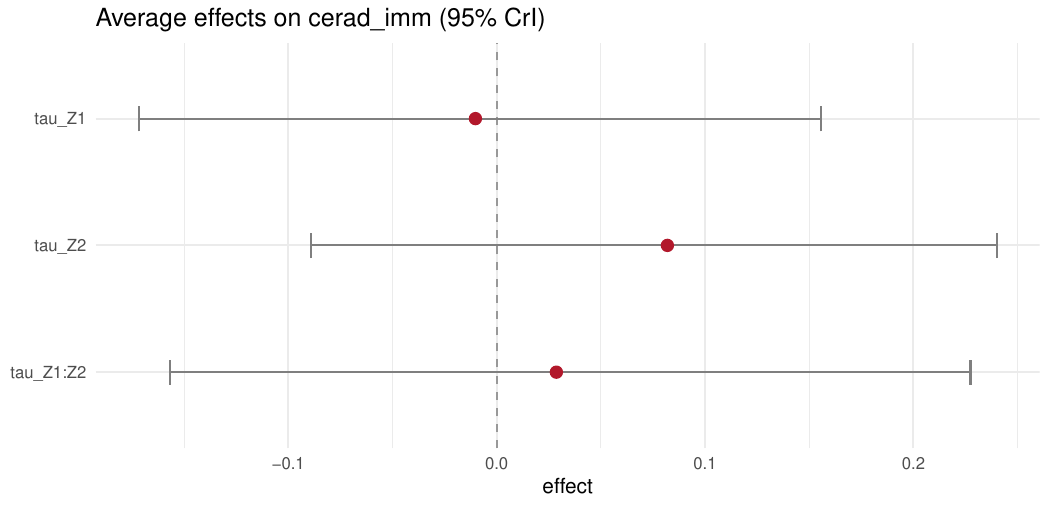}
\caption{Forest plots of the average component effects with $95\%$ credible intervals for the two NHANES analyses: the cardiometabolic panel (left) and the cognitive panel (right). Each row is a treatment main effect or the interaction; the point is the posterior mean and the horizontal whisker the $95\%$ credible interval, with the dashed vertical line at zero marking no effect. Effects are in outcome standard-deviation (SD) units.}
\label{fig:nhanes-forest}
\end{figure}

\begin{figure}[H]\centering
\includegraphics[width=0.42\textwidth]{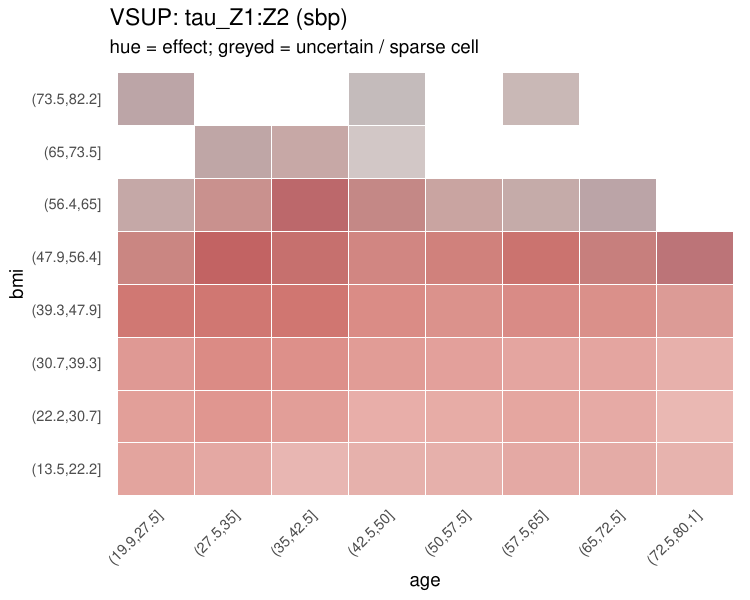}\hfill
\includegraphics[width=0.42\textwidth]{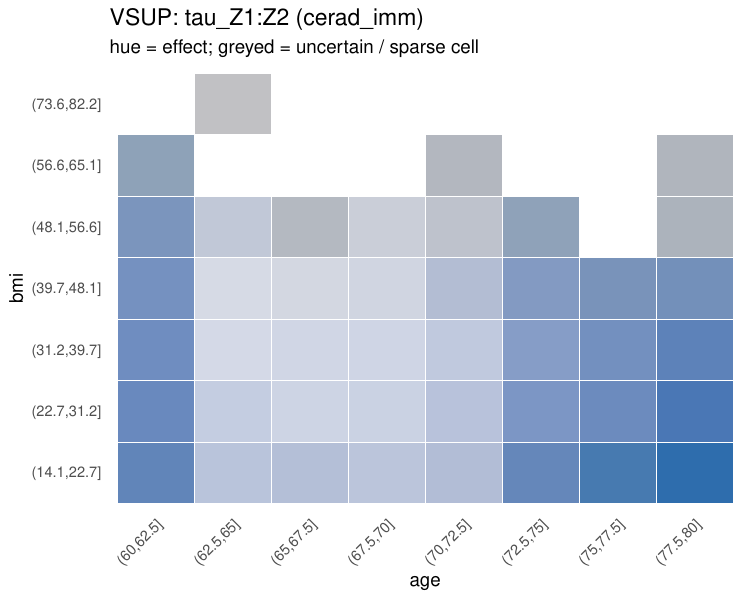}
\caption{Exploratory value-suppressing uncertainty maps of the estimated interaction surface over age (horizontal axis) and body-mass index (vertical axis): statin$\times$physical-activity on total cholesterol (left) and antihypertensive-medication$\times$physical-activity on CERAD immediate recall (right). Hue encodes the estimated interaction effect on a diverging colour scale centred at zero, and the colour desaturates toward grey as posterior uncertainty grows or the cell is sparsely populated, so no unsupported region is over-read.}
\label{fig:nhanes-vsup}
\end{figure}

\subsection{Head-to-head on the real data}
On the same two datasets we ran the \pkg{grf} multi-arm causal forest (arms = the
four cells of each $2\times2$) and recovered the main and interaction contrasts from
its arm-versus-baseline predictions. The two methods agree closely on the supported
effects: for the cardiometabolic panel, \pkg{grf} gives a statin effect on total
cholesterol of $-0.55$~SD (MVBCF $-0.45$) and on diastolic BP $-0.15$~SD (MVBCF
$-0.15$); for cognition, the activity effect on Animal Fluency is $0.21$ (MVBCF
$0.19$) and on Digit-Symbol $0.13$ (MVBCF $0.12$), with the antihypertensive main
effects near zero in both. The factorial MVBCF additionally delivers the joint
multivariate posterior, calibrated ATE intervals, the outcome correlation structure
$\Sig$, and the uncertainty-aware interaction maps in a single fit.

\subsection{Debiased average effects}\label{sec:nhanes-debiased}
We recompute every average effect with the cross-fitted one-step debiased estimator
of \S\ref{app:debias} (efficient standard errors and Wald intervals; the cell
propensities are cross-fitted by random forest). Table~\ref{tab:nhanes-debiased}
places it beside the raw posterior ATE for the effects that are (in either column)
distinguishable from zero. The two agree closely on the strongly-supported effects --- corroborating that the
posterior average-effect intervals were already calibrated
(Corollary~\ref{cor:ate}) --- while the debiased version adds the
semiparametric-efficiency guarantee of Theorem~\ref{thm:dml} and, being doubly
robust, protection against mild propensity/outcome misspecification. Substantively
it confirms the headline findings: statin use lowers total cholesterol by
$0.44$~SD ($95\%$ confidence interval, CI, $[-0.57,-0.31]$) and physical activity raises Animal-Fluency
by $0.24$~SD $[0.11,0.36]$ and Digit-Symbol by $0.11$~SD $[0.01,0.21]$. Once cross-fitted, the
statin$\to$diastolic-BP effect attenuates to $-0.11$~SD $[-0.23,0.01]$ and is no
longer distinguishable from zero, so we do not claim it. Debiasing instead sharpens one borderline association ---
antihypertensive medication with lower Digit-Symbol, $-0.11$~SD $[-0.21,-0.01]$ ---
which we flag as exploratory. No interaction is supported under debiasing either.

\begin{table}[H]\centering
\caption{Raw posterior ATE vs.\ the cross-fitted debiased estimator
(\S\ref{app:debias}), SD units, $95\%$ interval in brackets; shown for effects
distinguishable from zero in either column. The debiased column carries efficient
standard errors and is doubly robust.}\label{tab:nhanes-debiased}
\begin{tabular}{lllcc}
\toprule
Analysis & Effect & Outcome & Raw posterior [CrI] & Debiased [CI]\\
\midrule
Cardiometabolic & Statin ($\tau_1$) & Total chol.  & $-0.45\,[-0.58,-0.34]$ & $-0.44\,[-0.57,-0.31]$\\
Cardiometabolic & Statin ($\tau_1$) & Diastolic BP & $-0.15\,[-0.28,-0.02]$ & $-0.11\,[-0.23,\,0.01]$\\
Cognitive & Activity ($\tau_2$) & Animal fluency & $0.19\,[0.09,0.29]$ & $0.24\,[0.11,0.36]$\\
Cognitive & Activity ($\tau_2$) & Digit-symbol   & $0.12\,[0.03,0.20]$ & $0.11\,[0.01,0.21]$\\
Cognitive & Antihyp.\ med ($\tau_1$) & Digit-symbol & $-0.08\,[-0.17,\,0.00]$ & $-0.11\,[-0.21,-0.01]$\\
\bottomrule
\end{tabular}
\end{table}

\section{Discussion}\label{sec:disc}
The factorial multivariate causal forest estimates heterogeneous main and
interaction effects of several crossed treatments on several correlated outcomes
jointly, with consistent, calibrated average-effect inference and honest behaviour
under misspecification. Its unifying idea is structural rather than incremental: the
ANOVA/M\"obius decomposition of the treatment lattice turns every main effect and
interaction into its own sum-of-trees forest, and a single indicator-weighted kernel
samples all of them, with the prognostic surface recovered as the degenerate,
indicator-$\equiv\!1$ component (Proposition~\ref{prop:reduce}). The two-treatment
model, the single-treatment multivariate causal forest we build on, and the general
order-$r$ truncation are therefore one estimator at different component sets --- and
the shared inverse-Wishart residual covariance lets the outcomes borrow strength
instead of being modelled one regression at a time. This is what lets a single fit
return the whole factorial picture, with coherent uncertainty, rather than a stack of
separate analyses that must be reconciled after the fact.

The practical payoff is visible in the applications. Because interactions are
estimated as first-class objects rather than inferred from stratified subgroups, a
practitioner can read off whether a second treatment buffers or amplifies the first
directly, with an interval and an explicit support check attached; and because the
outcomes are modelled jointly, the answer is a coherent multivariate profile rather
than a set of separately-tested endpoints carrying an uncontrolled multiplicity. The
value-suppressing uncertainty maps and the overlap diagnostic are part of this same
discipline: they make the thin cell of a factorial design --- the place where an
interaction is most tempting and least supported --- visible instead of silently
over-read.

Our empirical results are calibrated against the established literature rather than
presented in isolation. On the cardiometabolic panel the method recovers the
randomised-trial lipid-lowering effect of statins \citep{ctt2010ldl} from messy
observational data, a positive control that few observational pipelines pass so
cleanly, while returning an honest null on physical activity and on the
statin$\times$activity interaction despite a populated cell. On cognition it reproduces
the selective processing-speed/executive benefit of physical activity reported by the
exercise-and-cognition meta-analysis \citep{northey2018exercise} and finds no clear
cognitive main effect of antihypertensive medication, consistent with the modest and
still-debated signal of the SPRINT MIND trial \citep{sprintmind2019}. Across the
three cross-domain illustrations the same pattern holds: randomised or near-orthogonal
data reproduce the classical answer, and observational data are moved in the direction
the gold-standard evidence implies. In each case the contribution is not a new
substantive claim over and above those studies but the delivery, from one fit, of the
joint multivariate effects, the interaction and the calibrated uncertainty they report
only piecemeal.

Three limitations frame the next steps. First, per-unit (CATE) credible intervals
under-cover --- a documented BART/BCF behaviour --- so ATE-level inference is the
calibrated primary claim; cross-fitting or conformal calibration of the CATE
intervals is a natural remedy, and the cross-fitted debiased estimator of
\S\ref{app:debias} already supplies efficient, doubly robust average-effect inference.
Second, interactions require overlap on their factorial cells; the overlap diagnostic
and heredity shrinkage make the model degrade gracefully, but the analyst must report
unsupported interactions as unsupported, not null. Third, the Gaussian likelihood
suits continuous outcomes; survival (with censoring), binary and count endpoints call
for the corresponding likelihoods, which the same indicator-weighted kernel
accommodates on a latent-Gaussian scale (Appendix~\ref{app:nongaussian}). The
head-to-head study (\S\,Comparison) shows the factorial MVBCF dominating separate
forests, a random-forest $T$-learner and a regularized linear factorial on
effect-recovery RMSE, and matching the \texttt{grf} generalized-random-forest
implementation on the supported effects while additionally delivering the joint
posterior and the uncertainty-aware interaction maps; a posterior-contraction theory
for the multi-treatment case and a principled interaction-order selection procedure
remain the main priorities for future work.

All results are regenerated by the accompanying \textsf{R} package and scripts.

\bibliographystyle{plainnat}
\bibliography{refs}
\appendix
\section{Appendix: proofs and technical conditions}\label{app:proofs}

Throughout, $C,c,c'$ denote positive constants depending only on
$(T,d,q,\pi_0,c_\Sigma,C_\Sigma)$ and changing line to line; $\|\cdot\|_n$ is the
empirical $L_2(\mathbb P_n)$ norm on the covariate design and $\|\cdot\|_F$ the
Frobenius norm. Write $\mathcal L=\{S:S\subseteq[T]\}$, $K=|\mathcal L|=2^T$, and
$g_{S'}(\bx)=m(\mathbf 1_{S'},\bx)$ for the $K$ cell surfaces.

\subsection{A structural isometry between cells and components}
\begin{lemma}[Zeta/M\"obius conditioning]\label{lem:zeta}
Let $A\in\{0,1\}^{K\times K}$ be the subset-inclusion (zeta) matrix
$A_{S',S''}=\mathbf 1\{S''\subseteq S'\}$. Then $g=A\beta$ in the sense
$g_{S'}=\sum_{S''\subseteq S'}\beta_{S''}$, $A$ is invertible with
$(A^{-1})_{S',S''}=(-1)^{|S'\setminus S''|}\mathbf 1\{S''\subseteq S'\}$, and
$\|A\|_\infty=\|A^{-1}\|_\infty\le 2^{T}$. Consequently, for functions on
$\mathcal X$,
\[
\Big(\!\sum_{S}\|\beta_S-\beta_{S,0}\|_n^2\Big)^{1/2}\!\!
\;\le\; \|A^{-1}\|_{\mathrm{op}}\Big(\!\sum_{S'}\|g_{S'}-g_{S',0}\|_n^2\Big)^{1/2},
\qquad \text{and symmetrically with }A,
\]
so recovering $m$ in $\|\cdot\|_n$ is equivalent, up to the $T$-dependent constants
$\|A\|_{\mathrm{op}},\|A^{-1}\|_{\mathrm{op}}$, to recovering all $K$ components.
\end{lemma}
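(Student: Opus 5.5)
The plan is to separate the purely combinatorial facts about $A$ from the analytic norm inequality. The analytic part then reduces to applying a fixed $K\times K$ linear map pointwise in $\bx$ and averaging over the design.

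\emph{Step 1: the representation $g=A\beta$.} This is the computation already made in the proof of Proposition~\ref{prop:ident}. Evaluating \eqref{eq:multilinear} at $\bm z=\mathbf 1_{S'}$ and using $D^{S''}(\mathbf 1_{S'})=\mathbf 1\{S''\subseteq S'\}$ gives $g_{S'}(\bx)=\sum_{S''\subseteq S'}\beta_{S''}(\bx)$ for every $\bx$. The same identity holds for the truth, $g_0=A\beta_0$.

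\emph{Step 2: the inverse of $A$.} Let $B_{S',S''}=(-1)^{|S'\setminus S''|}\mathbf 1\{S''\subseteq S'\}$. I would show $AB=I$ directly:
\[
(AB)_{S',S''}=\sum_{S''\subseteq U\subseteq S'}(-1)^{|U\setminus S''|}.
\]
Writing $U=S''\cup W$ with $W\subseteq S'\setminus S''$, this sum equals $\sum_{j}\binom{|S'\setminus S''|}{j}(-1)^j=(1-1)^{|S'\setminus S''|}$. That is $1$ if $S'=S''$ and $0$ otherwise. If $S''\not\subseteq S'$ the range of $U$ is empty and the entry is $0$. Since $A$ is square, $B=A^{-1}$. (Alternatively, order subsets by cardinality: then $A$ is unitriangular, so it is invertible, and $B$ is its Möbius inverse by \citet{rota1964}.)

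\emph{Step 3: the norm bounds.} Row $S'$ of $A$ has exactly $2^{|S'|}$ nonzero entries, all equal to $1$. Row $S'$ of $A^{-1}$ has the same support with entries $\pm1$. Hence both absolute row sums are $2^{|S'|}$, and
\[
\|A\|_\infty=\|A^{-1}\|_\infty=\max_{S'}2^{|S'|}=2^T.
\]
Column $S''$ of $A$ or $A^{-1}$ has $2^{T-|S''|}$ nonzero entries of modulus one, so $\|A\|_1=\|A^{-1}\|_1=2^T$. The interpolation bound $\|\cdot\|_{\mathrm{op}}\le(\|\cdot\|_1\|\cdot\|_\infty)^{1/2}$ then gives $\|A\|_{\mathrm{op}},\|A^{-1}\|_{\mathrm{op}}\le2^T$, so the constants depend only on $T$.

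\emph{Step 4: the displayed inequality.} For each design point $\bx_i$, stack the component errors into the vector $(\beta_S(\bx_i)-\beta_{S,0}(\bx_i))_S$; when $q>1$, stack all $q$ coordinates as well, since $A$ acts identically on each. By Steps~1--2 this vector equals $A^{-1}\big(g(\bx_i)-g_0(\bx_i)\big)$. Hence
\[
\sum_S|\beta_S(\bx_i)-\beta_{S,0}(\bx_i)|^2\le\|A^{-1}\|_{\mathrm{op}}^2\sum_{S'}|g_{S'}(\bx_i)-g_{S',0}(\bx_i)|^2 .
\]
Averaging over $i$ and taking square roots gives the stated bound. Interchanging the roles of $A$ and $A^{-1}$ gives the symmetric bound.

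\emph{Main obstacle.} The delicate point is the final sentence of the lemma, which identifies ``recovering $m$'' with recovering all $K$ cell surfaces. The norm $\|m-m_0\|_n$ evaluates $m$ only at each unit's observed cell $\bZ_i$, not at all $K$ cells. The lemma itself is stated for the cell vector $g$, and I would prove it in that form. To connect it to $\|m-m_0\|_n$, I would use Assumption~\ref{as:pos}: each cell has conditional probability at least $\pi_0$, so in expectation
\[
\|m-m_0\|_{L_2}^2\ge\pi_0\sum_{S'}\|g_{S'}-g_{S',0}\|_{L_2}^2 ,
\]
and the reverse inequality holds with constant $1$. Passing from these population norms to empirical norms is needed only later, in Theorem~\ref{thm:contract}. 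Here I would flag it as the one non-combinatorial input rather than prove it.
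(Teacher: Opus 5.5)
Your proof is correct and follows the paper's route: unitriangularity and M\"obius inversion give $A^{-1}$; counting unit-modulus entries per row, and per column as you rightly make explicit, yields $\|A\|_{\mathrm{op}}\le\sqrt{\|A\|_1\|A\|_\infty}\le 2^T$; and applying the matrix pointwise at each design point before averaging gives the norm inequality. Two of your additions go beyond the paper's one-line proof: the direct binomial check of $AB=I$, and the observation that $\|m-m_0\|_n$ only evaluates each unit's observed cell, so the final equivalence needs positivity, for instance $\pi_0\sum_{S'}\|g_{S'}-g_{S',0}\|_{L_2}^2\le\|m-m_0\|_{L_2}^2\le\sum_{S'}\|g_{S'}-g_{S',0}\|_{L_2}^2$ at the population level.
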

\begin{proof}
$A$ is lower-triangular for any linear extension of the inclusion order with unit
diagonal, hence invertible; its inverse is the M\"obius function of the Boolean
lattice, $(-1)^{|S'\setminus S''|}$ on $S''\subseteq S'$ (standard M\"obius inversion \citep{rota1964}). Each row of $A$ or $A^{-1}$ has at most $2^{T}$ nonzero unit-modulus
entries, giving the $\|\cdot\|_\infty$ bound, and $\|\cdot\|_{\mathrm{op}}\le
\sqrt{\|\cdot\|_1\|\cdot\|_\infty}\le 2^{T}$. Applying the matrix rowwise to the
vector of functions $(\,\cdot\,)_{S}$ and taking $\|\cdot\|_n$ gives the norm
equivalence.
\end{proof}

\begin{lemma}[Cell reduction under positivity]\label{lem:cells}
Under Assumption~\ref{as:pos} the empirical cell frequencies satisfy
$\min_{S'}\widehat{\Pr}(\bZ=\mathbf 1_{S'})\ge \pi_0/2$ with probability
$1-e^{-cn}$. Conditionally on $\{\bZ_i\}$, the likelihood factorizes across cells,
$\prod_{S'}\prod_{i:\bZ_i=\mathbf 1_{S'}}\N_q\!\big(\by_i;g_{S'}(\bx_i),\Sig\big)$,
so with independent component priors the analysis reduces to $K$ Gaussian
nonparametric regressions sharing the covariance $\Sig$, each on a subsample of
size $\ge \pi_0 n/2$.
\end{lemma}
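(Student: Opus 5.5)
The lemma makes two claims: a probabilistic bound on the empirical cell frequencies, and a deterministic factorization of the likelihood. I will prove them separately.

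\emph{Cell frequencies.} Fix a cell $S'$ and set $W_i=\mathbf 1\{\bZ_i=\mathbf 1_{S'}\}$. The triples are i.i.d., so the $W_i$ are i.i.d.\ Bernoulli with mean $p_{S'}=\E[\Pr(\bZ=\mathbf 1_{S'}\mid\bx)]$. Assumption~\ref{as:pos} gives $\Pr(\bZ=\mathbf 1_{S'}\mid\bx)\ge\pi_0$ almost surely, hence $p_{S'}\ge\pi_0$.

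Hoeffding's inequality then gives
\[
\Pr\big(\widehat{\Pr}(\bZ=\mathbf 1_{S'})<\pi_0/2\big)\le\Pr\big(\bar W-p_{S'}<-\pi_0/2\big)\le e^{-n\pi_0^2/2}.
\]
A relative Chernoff bound would give the sharper exponent $e^{-n\pi_0/8}$, but either suffices. A union bound over the $K=2^T$ cells gives failure probability at most $2^Te^{-n\pi_0^2/2}$. Because $T$ is fixed (Assumption~\ref{as:noise}), this is at most $e^{-cn}$ for a constant $c$ depending on $(T,\pi_0)$, once $n$ is large enough. Smaller $n$ is absorbed by shrinking $c$, since the bound is trivial when the right-hand side exceeds one.

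The positivity assumption as stated only covers the sub-lattice of a given $S$. Here I read it as holding for every $S'\subseteq[T]$, i.e.\ for the full lattice, which is what ``$\min_{S'}$'' requires. If the assumption were meant only for the sub-lattice under study, the same argument applies with the minimum restricted to that sub-lattice.

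\emph{Likelihood factorization.} By Assumption~\ref{as:noise}, conditionally on $(\bZ_i,\bx_i)$ the observations are independent with $\by_i\sim\N_q(m(\bZ_i,\bx_i),\Sig)$. Since $\bZ_i$ takes values in $\{0,1\}^T=\{\mathbf 1_{S'}:S'\subseteq[T]\}$, the cells partition the index set.

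On the cell $\{i:\bZ_i=\mathbf 1_{S'}\}$ we have $m(\bZ_i,\bx_i)=g_{S'}(\bx_i)$. This follows from the multilinear expansion \eqref{eq:multilinear} together with $D^{S''}(\mathbf 1_{S'})=\mathbf 1\{S''\subseteq S'\}$, exactly as in the proof of Proposition~\ref{prop:ident}. Regrouping the product $\prod_i$ by cells yields the displayed product form. The factorization is an algebraic identity, valid for every realization, not just on the high-probability event.

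\emph{Reduction to $K$ regressions.} By Lemma~\ref{lem:zeta}, the map $\beta\mapsto g=A\beta$ is a bijection. The independent priors on $(\beta_S)_S$ therefore induce a prior on $(g_{S'})_{S'}$, under which the model is $K$ Gaussian regressions sharing $\Sig$, each on its own cell subsample. On the event from the first step, each subsample has size at least $\pi_0 n/2$.

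\emph{Main subtlety.} The induced prior on $(g_{S'})$ is no longer independent across cells, because each $g_{S'}$ sums several $\beta_{S''}$. So ``reduces to $K$ regressions'' has to mean a factorized likelihood with a prior that is still joint, not $K$ fully independent problems. I will state it that way: the likelihood factorizes across cells, and prior-mass and entropy conditions for the joint prior follow from those of the components. The latter holds because $A$ is a fixed bounded linear map, so an $\varepsilon$-ball in $\beta$ maps into a $2^T\varepsilon$-ball in $g$, and $A^{-1}$ gives the reverse inclusion.
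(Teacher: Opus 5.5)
Your proof follows the paper's own route: Hoeffding for the i.i.d.\ cell indicators with mean at least $\pi_0$, regrouping the conditionally independent Gaussian likelihood by cell via $D^{S''}(\mathbf 1_{S'})=\mathbf 1\{S''\subseteq S'\}$, and the fixed reparametrization $g=A\beta$ of Lemma~\ref{lem:zeta}; your refinements are correct, namely the union bound over the $2^T$ cells, positivity on the full lattice, and the observation that the induced prior on $(g_{S'})$ is joint rather than a product (which the paper's ``induces one on $\{g_{S'}\}$ and vice versa'' glosses over). The one slip is that small $n$ cannot be ``absorbed by shrinking $c$'': for $n<2^T$ some cell is empty by pigeonhole, so the failure probability is exactly $1$ and no $c>0$ works, and the bound must be read as holding for $n\ge n_0$ or in the form $1-2^Te^{-n\pi_0^2/2}$, as your ``once $n$ is large enough'' already indicates.
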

\begin{proof}
The frequency bound is a Bernstein/Hoeffding inequality \citep{boucheron2013} for the i.i.d.\ indicators
$\mathbf 1\{\bZ_i=\mathbf 1_{S'}\}$ with mean $\ge\pi_0$. Given the assignments,
the observations in distinct cells are independent with cell-specific mean
$g_{S'}$, giving the stated factorization; $A$ from Lemma~\ref{lem:zeta} is a fixed
linear reparametrization $g\leftrightarrow\beta$, so a product prior on
$\{\beta_S\}$ induces one on $\{g_{S'}\}$ and vice versa.
\end{proof}

\subsection{Posterior contraction: verification of the Ghosal--Ghosh--van der Vaart (GGV) conditions}
We use the non-i.i.d.\ contraction theorem of \citet{ghosal2007} with the average
R\'enyi/Hellinger semimetric $d_n$, which for Gaussian regression with covariance in
the band $c_\Sigma\mathbf I\preceq\Sig\preceq C_\Sigma\mathbf I$ is equivalent to
$\|m-m_0\|_n+\|\Sig-\Sig_0\|_F$ up to constants. Fix
$\varepsilon_{n,S}=n^{-\alpha_S/(2\alpha_S+d)}(\log n)^{\kappa}$,
$\varepsilon_n=\max_S\varepsilon_{n,S}$, and $\kappa$ as in
\citet{rockova2020}/\citet{linero2018soft}. It suffices to exhibit sieves
$\mathcal F_n$ and constants satisfying the following prior-mass, entropy and remaining-mass conditions \citep{ghosal2000}, in which $\mathrm{KL}$ is the Kullback--Leibler divergence:
\begin{align}
\text{(P) prior mass:}\quad & \Pi\big(\mathrm{KL}(p_0,p)\le\varepsilon_n^2,\ \mathrm{Var}\le\varepsilon_n^2\big)\ \ge\ e^{-c n\varepsilon_n^2};\label{eq:P}\\
\text{(E) entropy:}\quad & \log N\!\big(\varepsilon_n,\mathcal F_n,d_n\big)\ \le\ c\,n\varepsilon_n^2;\label{eq:E}\\
\text{(R) remaining mass:}\quad & \Pi(\mathcal F_n^{c})\ \le\ e^{-(c+4)n\varepsilon_n^2}.\label{eq:R}
\end{align}

On the band, the Gaussian KL divergence obeys
$\mathrm{KL}(p_{m_0,\Sig_0},p_{m,\Sig})\le C\big(\|m-m_0\|_n^2+\|\Sig-\Sig_0\|_F^2\big)$,
and likewise for the KL variation, so the event in \eqref{eq:P} contains a product
of $L_2$/Frobenius balls. By Lemma~\ref{lem:zeta} it is enough to fill an
$\varepsilon_{n,S}$-ball around each $\beta_{S,0}$ and an
$n^{-1/2}(\log n)^{\kappa}$-ball around $\Sig_0$. For each component the tree/forest
prior charges an $\varepsilon_{n,S}$-KL ball with mass
$\ge\exp(-c\,n\varepsilon_{n,S}^2)$: for regression trees/forests this is the
prior-concentration lemma of \citet[][Thm.~3.1 and its approximation lemma]{rockova2020};
for soft trees, adaptively in $(\alpha_S,d)$ and under sparsity, it is
\citet[][Thm.~3--4]{linero2018soft}. The inverse-Wishart prior, having a density
bounded below on the band, charges the $\Sig_0$-ball with mass $\ge c>0$. Taking the
product over the $K=2^T$ independent components and $\Sig$ multiplies the bounds;
since $K$ is fixed and $\sum_S\varepsilon_{n,S}^2\le K\varepsilon_n^2$, \eqref{eq:P}
follows.

Let $\mathcal F_n=\{\,m=\sum_S D^S\beta_S:\ \beta_S\in\mathcal T_{n,S}\,\}\times
\{\Sig:\ c_\Sigma\mathbf I\preceq\Sig\preceq C_\Sigma\mathbf I\}$, where
$\mathcal T_{n,S}$ is the set of forests with at most $L_{n,S}\asymp n\varepsilon_{n,S}^2/\log n$
leaves and coefficients bounded by $n$. The metric entropy of a forest class with
$L$ leaves is $\lesssim L\log n$ (leaf locations $+$ split coordinates), so
$\log N(\varepsilon_n,\mathcal F_n,d_n)\lesssim \sum_S L_{n,S}\log n + \log N(\varepsilon_n,\text{band})
\lesssim n\varepsilon_n^2$, which is \eqref{eq:E} (the band contributes only a
$q^2\log(1/\varepsilon_n)$ term). The tree prior's law on the number of leaves has
geometric/Poisson tails, so $\Pi(\text{more than }L_{n,S}\text{ leaves})\le
e^{-c'L_{n,S}\log n}\le e^{-(c+4)n\varepsilon_n^2}$ for a suitable leaf-penalty,
and likewise the Gaussian coefficient tails control the coefficient-bound
excursion; a union over the fixed $K$ components gives \eqref{eq:R}
(cf.\ \citealp{rockova2020}).

Conditions \eqref{eq:P}--\eqref{eq:R} are the hypotheses of
\citet[][Thm.~2.1]{ghosal2007}; the associated testing condition holds because for
Gaussian regression on the band there exist exponentially powerful $d_n$-tests
against $\{d_n(\cdot,(m_0,\Sig_0))>M\varepsilon_n\}$ (likelihood-ratio/Hellinger
tests, using sub-Gaussianity and the eigenvalue band to pass between $d_n$ and
$\|\cdot\|_n$). Hence
$\Pi\big(d_n\big((m,\Sig),(m_0,\Sig_0)\big)>M_n\varepsilon_n\mid\text{data}\big)\to0$
in $P_0$-probability for $M_n\to\infty$, which is Theorem~\ref{thm:contract}: the
$m$-part contracts at $\varepsilon_n$, the $\Sig$-part at $n^{-1/2}(\log n)^{\kappa}$,
and Lemma~\ref{lem:zeta} transfers the rate to every component $\beta_S$. \qedsymbol

\subsection{Bernstein--von Mises for the average effect}
Fix $S\neq\varnothing$ and a coordinate of the outcome; the average effect
$\psi_S=\mathbb E_{P_X}[\tau_S(\bx)]$ is, by Proposition~\ref{prop:ident} and
Lemma~\ref{lem:zeta}, a linear functional of the cell regressions,
$\psi_S=\sum_{S'\subseteq S}(-1)^{|S\setminus S'|}\,\mathbb E_{P_X}[g_{S'}(\bx)]$.

\begin{lemma}[Efficient influence function]\label{lem:eif}
Under Assumptions~\ref{as:sutva}--\ref{as:pos}, the efficient influence function (EIF) of
$\psi_S$ in the observed-data model is
\[
\tilde\varphi_S(\by,\bZ,\bx)=\sum_{S'\subseteq S}(-1)^{|S\setminus S'|}
\left[\,g_{S'}(\bx)-\mathbb E_{P_X}g_{S'}
+\frac{\mathbf 1\{\bZ=\mathbf 1_{S'}\}}{e_{S'}(\bx)}\big(\by-g_{S'}(\bx)\big)\right],
\qquad e_{S'}(\bx)=\Pr(\bZ=\mathbf 1_{S'}\mid\bx),
\]
with semiparametric efficiency bound $V_S=\mathrm{Var}\,\tilde\varphi_S$
$(<\infty$ by Assumption~\ref{as:pos}$)$; it is the signed combination of the usual
augmented inverse-probability-weighting (AIPW) influence functions of the cell means.
\end{lemma}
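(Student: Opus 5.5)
By linearity of both the target and the influence-function calculus, I will reduce the lemma to the familiar efficient influence function of a single cell mean. Write $\psi_S=\sum_{S'\subseteq S}(-1)^{|S\setminus S'|}\theta_{S'}$ with $\theta_{S'}=\mathbb E_{P_X}[g_{S'}(\bx)]$. Proposition~\ref{prop:ident} shows each $\theta_{S'}$ is a functional of the observed-data law, namely the covariate-average of $\E[\by\mid\bZ=\mathbf 1_{S'},\bx]$. The pathwise derivative of a finite signed sum is the signed sum of the pathwise derivatives. So it suffices to show two things. First, each $\theta_{S'}$ has efficient influence function $\varphi_{S'}=g_{S'}(\bx)-\theta_{S'}+\mathbf 1\{\bZ=\mathbf 1_{S'}\}(\by-g_{S'}(\bx))/e_{S'}(\bx)$. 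Second, the model is nonparametric in the sense that the tangent space is all of $L_2^0(P)$, so any influence function is automatically the efficient one.

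For the cell mean I will carry out the standard Hahn/Robins computation. Take a regular one-dimensional submodel $P_\epsilon$ with score $s=s_X(\bx)+s_Z(\bZ\mid\bx)+s_Y(\by\mid\bZ,\bx)$, using the factorization $p(\bx)p(\bZ\mid\bx)p(\by\mid\bZ,\bx)$. Then differentiate $\theta_{S'}(P_\epsilon)=\int\!\int \by\,p_\epsilon(\by\mid\mathbf 1_{S'},\bx)\,d\by\,p_\epsilon(\bx)\,d\bx$ at $\epsilon=0$. The derivative has two pieces. The $p(\bx)$-piece gives $\E[(g_{S'}(\bx)-\theta_{S'})s_X]$. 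The conditional-outcome piece gives $\E[g\text{-residual}\cdot s_Y \text{ on the cell}]$, which I rewrite as $\E\big[\mathbf 1\{\bZ=\mathbf 1_{S'}\}e_{S'}(\bx)^{-1}(\by-g_{S'}(\bx))\,s\big]$ by inserting $\mathbf 1\{\bZ=\mathbf 1_{S'}\}/e_{S'}$ and conditioning on $(\bZ,\bx)$. The $s_Z$ component does not enter $\theta_{S'}$. I will check that $\varphi_{S'}$ is orthogonal to it because $\E[\mathbf 1\{\bZ=\mathbf 1_{S'}\}(\by-g_{S'})\mid\bZ,\bx]=0$. I will also check that the $s_X$ and $s_Y$ pieces of $\varphi_{S'}$ are orthogonal to the wrong scores, since they are mean zero given $\bx$ and given $(\bZ,\bx)$ respectively. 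These checks establish $\frac{d}{d\epsilon}\theta_{S'}(P_\epsilon)=\E[\varphi_{S'}s]$ for every score. Assumption~\ref{as:pos} is exactly what makes $1/e_{S'}\le1/\pi_0$ bounded, so $\varphi_{S'}\in L_2$. For the $\by$-part I will need $\by$ square integrable; Assumption~\ref{as:noise} supplies this through Gaussian errors with bounded $\Sig_0$ and bounded $g$.

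Summing with signs $(-1)^{|S\setminus S'|}$ gives exactly $\tilde\varphi_S$. The model places no restriction on the observed law beyond positivity, because unconfoundedness and SUTVA only link the potential outcomes to it. The tangent space is therefore saturated, $\tilde\varphi_S$ is the unique influence function, and it is the efficient one. Finiteness of $V_S=\mathrm{Var}\,\tilde\varphi_S$ then follows from the Cauchy--Schwarz inequality together with the bound $\mathrm{Var}\,\varphi_{S'}\le 2\mathrm{Var}\,g_{S'}+2\pi_0^{-1}\E\|\by-g_{S'}\|^2$ and the $2^{|S|}$-term triangle inequality. The main obstacle I expect is the rewriting step in the conditional-outcome piece: it is easy to mis-handle the change of measure from $p(\by\mid\mathbf 1_{S'},\bx)$ to the joint law. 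A secondary subtlety is that the components are vector-valued. I will treat one outcome coordinate at a time, as in the setup preceding the lemma, so the scalar argument applies coordinatewise.
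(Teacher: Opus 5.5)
Your proposal is correct and takes the paper's route: decompose $\psi_S$ into the signed M\"obius combination of cell means, use the standard AIPW efficient influence function for each cell mean, and combine by linearity. The only difference is that you derive the cell-mean influence function by the usual score calculation in the saturated observed-data model, where the paper simply cites it as standard.

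One small point: take $\E\|\by\|^2<\infty$ as a direct moment condition rather than from the Gaussian Assumption~\ref{as:noise}. Imposed as a model restriction, that assumption would make your saturated-tangent-space step false. The conclusion would still hold, because the AIPW function lies in that model's tangent space and so remains the canonical gradient.
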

\begin{proof}
Each cell mean $\chi_{S'}=\mathbb E_{P_X}[g_{S'}]=\mathbb E[\by(\mathbf 1_{S'})]$
is pathwise differentiable with the AIPW influence function in brackets (standard, following \citealp{kennedy2023};
Assumption~\ref{as:pos} bounds $1/e_{S'}$). $\psi_S=\sum_{S'\subseteq S}(-1)^{|S\setminus S'|}\chi_{S'}$
is a finite linear combination, so its EIF is the same combination of the
$\chi_{S'}$-EIFs, and its efficiency bound is the variance of that sum.
\end{proof}

\begin{proposition}[Semiparametric BvM]\label{prop:bvm}
Suppose, in addition to Theorem~\ref{thm:contract}, the no-bias condition
$\varepsilon_{n,S}\cdot\varepsilon_{n,e}=o(n^{-1/2})$ holds, where $\varepsilon_{n,e}$
is the estimation rate of the cell propensities $e_{S'}$ (guaranteed by
undersmoothing, i.e.\ enlarging tree depth, or by a one-step/targeted-maximum-likelihood-estimation (TMLE) debiasing of the
plug-in $\psi_S$), and the least-favourable submodel perturbation is prior-charged
(a smooth shift the forest prior supports). Then the marginal posterior of
$\sqrt n(\psi_S-\psi_{S,0})$ satisfies
\[
\sup_{B}\Big|\Pi\big(\sqrt n(\psi_S-\psi_{S,0})\in B\mid\text{data}\big)
-\N\!\big(\Delta_n,V_S\big)(B)\Big|\xrightarrow{P_0}0,
\qquad \Delta_n=\tfrac1{\sqrt n}\!\sum_i\tilde\varphi_S(\cdot_i),
\]
so the posterior is asymptotically $\N(0,V_S)$ centred at an efficient estimator,
and its $1-\alpha$ credible set is an asymptotic $1-\alpha$ confidence set for
$\psi_{S,0}$.
\end{proposition}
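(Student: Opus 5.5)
The plan follows the standard semiparametric Bernstein--von Mises route of Castillo--Rousseau and Ray--van der Vaart for average-treatment-effect functionals under BART-type priors. The ingredients are a local asymptotic normality (LAN) expansion along a least-favourable perturbation, a change of measure in the prior, and posterior localisation from Theorem~\ref{thm:contract}. Fix $S$ and an outcome coordinate, and write $\psi_S=\sum_{S'\subseteq S}(-1)^{|S\setminus S'|}\chi_{S'}$ as in Lemma~\ref{lem:eif}.

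\textbf{Localisation and decoupling of the covariate law.} First, Theorem~\ref{thm:contract} together with Lemma~\ref{lem:zeta} localises the posterior. On an event of posterior probability tending to one we have $\|g_{S'}-g_{S',0}\|_n\le M_n\varepsilon_n$ for every $S'$ and $\|\Sig-\Sig_0\|\lesssim M_n n^{-1/2}(\log n)^\kappa$. So it suffices to establish the BvM statement for the posterior restricted to this set $\mathcal H_n$. Second, I would decouple $P_X$. As in the BCF/BART BvM literature, $\mathbb E_{P_X}$ is replaced by the empirical average $\mathbb P_n$. The discrepancy $(\mathbb P_n-P_X)g_{S',0}$ contributes exactly the $g_{S'}-\mathbb E g_{S'}$ part of $\Delta_n$. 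The remainder $(\mathbb P_n-P_X)(g_{S'}-g_{S',0})$ is $o_P(n^{-1/2})$ uniformly on $\mathcal H_n$ by a Donsker/equicontinuity bound for the sieve $\mathcal T_{n,S}$, using its entropy $\lesssim n\varepsilon_n^2$ from the proof of Theorem~\ref{thm:contract}. This leaves the functional $\tilde\psi_S=\mathbb P_n\sum_{S'}(-1)^{|S\setminus S'|}g_{S'}$ of the cell regressions alone.

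\textbf{LAN along the least-favourable direction.} Conditionally on $\{\bZ_i,\bx_i\}$, Lemma~\ref{lem:cells} factorises the likelihood across cells. I would perturb $g_{S',t}=g_{S'}-t\,n^{-1/2}\,\xi_{S'}$ with Riesz representer
$$\xi_{S'}(\bx)=(-1)^{|S\setminus S'|}\,\Sig\,\mathbf e/e_{S'}(\bx),$$
where $\mathbf e$ is the selected coordinate; premultiplying by $\Sig$ accounts for the multivariate Gaussian score. The Gaussian log-likelihood ratio is exactly quadratic. Expanding it gives
$$\log\frac{p_{g_t}}{p_g}=t\,\sqrt n\,(\tilde\psi_S(g)-\tilde\psi_S(g_0))\;-\;t\,\Delta_n'\;-\;\tfrac{t^2}{2}V_S\;+\;R_n(g),$$
with $\Delta_n'$ the inverse-probability-weighted residual part of $\Delta_n$. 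The remainder is
$$R_n(g)\approx\sqrt n\,\mathbb P_n\sum_{S'}\mathbf 1\{\bZ=\mathbf 1_{S'}\}\,(g_{S'}-g_{S',0})\,(\hat\xi_{S'}-\xi_{S'})$$
plus an empirical-process term. Because only a propensity estimate $\hat e_{S'}$ can be used in the perturbation (or in the propensity covariate of the prognostic forest), Cauchy--Schwarz gives $R_n=O(\sqrt n\,\varepsilon_{n,S}\varepsilon_{n,e})=o(1)$ by the no-bias hypothesis. Positivity (Assumption~\ref{as:pos}) keeps $1/e_{S'}\le 1/\pi_0$ bounded, and the $\Sig$-part contributes $o(1)$ at its $n^{-1/2}$ rate.

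\textbf{Change of measure and conclusion.} With LAN in hand, the Laplace transform satisfies
$$\mathbb E\big[e^{t\sqrt n(\psi_S-\psi_{S,0})}\mid\text{data},\mathcal H_n\big]=e^{t\Delta_n+t^2V_S/2+o_P(1)}\cdot\frac{\int_{\mathcal H_n}p_{g_t}\,d\Pi(g)}{\int_{\mathcal H_n}p_g\,d\Pi(g)}.$$
The ratio tends to one if the shifted prior $\Pi\circ(g\mapsto g_t)^{-1}$ is asymptotically indistinguishable from $\Pi$ on $\mathcal H_n$. This is the \emph{main obstacle}: tree priors are not shift-invariant, and the hypothesis that the least-favourable perturbation is prior-charged must be made operational. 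I would first try the Ray--van der Vaart device. It approximates $\xi_{S'}$ by a step function $\xi_{n,S'}$ on partitions the forest prior charges, shifts a single tree's Gaussian leaves by $n^{-1/2}t\,\xi_{n,S'}$, and bounds the Radon--Nikodym derivative of the Gaussian leaf prior by $e^{O(t^2L_{n,S}/n)}\to1$. The cost is an approximation error $\|\xi_{n,S'}-\xi_{S'}\|\,\varepsilon_{n,S}\sqrt n=o(1)$, which is again absorbed by the no-bias condition. Pointwise convergence of Laplace transforms for $t$ in a neighbourhood of zero then gives weak convergence to $\N(\Delta_n,V_S)$, where $\Delta_n$ is now the full centring. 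Since the limit is continuous, weak convergence upgrades to total variation over convex sets $B$ by Pólya's theorem, giving the supremum statement. Finally, $\Delta_n\Rightarrow\N(0,V_S)$ by the CLT applied to the EIF of Lemma~\ref{lem:eif}. The credible interval is centred at the efficient estimator $\psi_{S,0}+\Delta_n/\sqrt n$ with the correct width, so its frequentist coverage is asymptotically $1-\alpha$.
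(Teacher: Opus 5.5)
Your route is the paper's: its sketch invokes Castillo--Rousseau, with localisation from Theorem~\ref{thm:contract}, a no-bias remainder, prior-shift invariance along the least-favourable direction, and LAN. But the variance in your LAN step is wrong, so the claimed limit $\N(\Delta_n,V_S)$ does not follow. With your perturbation $g_{S',t}=g_{S'}-tn^{-1/2}\xi_{S'}$, $\xi_{S'}=(-1)^{|S\setminus S'|}\Sig\mathbf e/e_{S'}$ ($k$ the selected coordinate), the conditional Gaussian log-likelihood ratio has quadratic term $-\tfrac{t^2}{2}\,\tfrac1n\sum_i\sum_{S'}\mathbf 1\{\bZ_i=\mathbf 1_{S'}\}\Sig_{kk}/e_{S'}(\bx_i)^2\to-\tfrac{t^2}{2}V_S^{\mathrm{res}}$. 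Here $V_S^{\mathrm{res}}=\Sig_{0,kk}\sum_{S'\subseteq S}\E[1/e_{S'}(\bx)]$ is the variance of $\Delta_n'$ alone, consistent with your linear term, which carries only $\Delta_n'$. The IPW residual has conditional mean zero and $\sum_{S'\subseteq S}(-1)^{|S\setminus S'|}g_{S',0}=\tau_{S,0}$ (Proposition~\ref{prop:ident}), so
\[
V_S=V_S^{\mathrm{res}}+\mathrm{Var}\,\tau_{S,0,k}(\bx).
\]
Your decoupling step produces the term $\mathbb G_n\tau_{S,0}$, with $\mathbb G_n=\sqrt n(\mathbb P_n-P_X)$. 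It is a function of the data and fixed under the posterior, so it moves the centre but adds no spread. Your argument therefore gives $\N(\Delta_n,V_S^{\mathrm{res}})$ if $\psi_S$ is the computable sample average $\mathbb P_n\tau_S$. It gives $\N(\Delta_n',V_S^{\mathrm{res}})$ if $\psi_S=\int\tau_S\,dP_{X,0}$ with the true covariate law, since the $\mathbb G_n\tau_{S,0}$ term then cancels. In the first reading the credible interval is centred at an estimator with sampling variance $V_S$. Its width misses exactly the effect-heterogeneity variance $\mathrm{Var}\,\tau_{S,0,k}(\bx)$, which is nonzero in the settings this model targets, so it under-covers $\psi_{S,0}$.

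The missing ingredient is a prior on the covariate law, as in Ray and van der Vaart's semiparametric Bayesian causal inference. Give $P_X$ a Dirichlet-process prior independent of $(\beta,\Sig)$. The likelihood factorises, so its posterior (asymptotically the Bayesian bootstrap) is independent of that of $(\beta,\Sig)$. Conditionally on $\tau_S$, $\sqrt n(\int\tau_S\,dP_X-\mathbb P_n\tau_S)\approx\N(0,\mathrm{Var}\,\tau_{S,0,k}(\bx))$. Convolving this with $\N(\Delta_n,V_S^{\mathrm{res}})$ gives $\N(\Delta_n,V_S)$. Without such a prior, the statement must be restricted to the conditional average effect with variance $V_S^{\mathrm{res}}$. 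The paper's sketch only asserts that the LAN expansion yields $V_S$, so this is not resolved there either.

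Three further steps need conditions you do not have. First, entropy $\lesssim n\varepsilon_n^2$ does not make $\mathcal T_{n,S}$ Donsker. Chaining over $\{g-g_0:\ g\in\mathcal T_{n,S},\ \|g-g_0\|\le M_n\varepsilon_n\}$ gives $\sup|\mathbb G_n(g-g_0)|\lesssim\sqrt n\,\varepsilon_n^2$ up to logarithms. That is $o(1)$ only if $\varepsilon_n=o(n^{-1/4})$, roughly $2\alpha>d$ for the components involved, and the empirical-process part of $R_n$ needs the same.

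Second, consider a Gaussian $\N(0,\sigma_\mu^2)$ leaf prior with leaves shifted by $tn^{-1/2}\xi_{n,\ell}$. The log Radon--Nikodym derivative is $tn^{-1/2}\sigma_\mu^{-2}\sum_\ell\mu_\ell\xi_{n,\ell}-t^2(2n\sigma_\mu^2)^{-1}\sum_\ell\xi_{n,\ell}^2$. Your bound $e^{O(t^2L_{n,S}/n)}$ drops the linear part. That part is of order $tL_{n,S}/\sqrt n$ at posterior-typical leaf values, so you again need $L_{n,S}=o(\sqrt n)$.

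Third, the LAN and change-of-measure steps must use one and the same prior-accommodated direction $\xi_{n,S'}$. The resulting bias $\sqrt n\,\varepsilon_{n,S}\|\xi_{n,S'}-\xi_{S'}\|$ is a tree-approximation error for $1/e_{S'}$, not the propensity-estimation rate $\varepsilon_{n,e}$ in the hypothesis. The hypothesis controls it only if the prior is explicitly built along $1/\hat e_{S'}$, as in Ray--van der Vaart's propensity-augmented prior. A forest that merely uses $\hat\pi_t$ as a prognostic covariate is not built that way. Your version books two different remainders ($\hat\xi-\xi$ and $\xi_n-\xi$) against the same hypothesis. You need either a separate approximation condition or to read the prior-charged least-favourable hypothesis as exact shift invariance.
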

\begin{proof}[Proof sketch]
Apply \citet[][Thm.~2.1]{castillo2015}: their conditions are (i) a contraction rate
for the nuisance faster than needed for the functional --- supplied by
Theorem~\ref{thm:contract}; (ii) the no-bias/second-order remainder condition
$\mathbb E[(\hat e-e)(\hat g-g)]=o(n^{-1/2})$ --- exactly the displayed product-rate
condition, standard for doubly-robust functionals; and (iii) a change-of-measure
(prior-shift) invariance along the least-favourable direction $\tilde\varphi_S$ ---
which the additive forest prior admits because shifting $\beta_S$ by an
$O(n^{-1/2})$ smooth function costs only $o(1)$ in log-prior. The local-asymptotic-normality (LAN) expansion \citep{vandervaart1998} of
the Gaussian regression model then yields the stated Gaussian limit with the
efficient variance $V_S$ of Lemma~\ref{lem:eif}.
\end{proof}

By contrast the pointwise map $\bx\mapsto\tau_S(\bx)$ is not $\sqrt n$-estimable: its
posterior contracts at the nonparametric rate $\varepsilon_{n,S}\gg n^{-1/2}$
(Theorem~\ref{thm:contract}), and no BvM holds at a fixed $\bx$, so unit-level
(CATE) credible bands are not guaranteed calibrated at finite $n$ --- the behaviour
seen in the Monte Carlo study.

\subsection{Closing the no-bias gap: a cross-fitted debiased estimator}\label{app:debias}
Proposition~\ref{prop:bvm} assumed the no-bias condition; the following construction
discharges it, giving an average-effect estimator that is $\sqrt n$-efficient
under a product nuisance-rate condition (weaker than either rate being
$\sqrt n$) and no Donsker/entropy condition. Algorithm~\ref{alg:debias} summarizes it.

\begin{algorithm}[t]
\caption{Cross-fitted one-step (AIPW) debiased average effect}
\label{alg:debias}
\begin{algorithmic}[1]
\State \textbf{Input:} data $\{(\by_i,\bZ_i,\bx_i)\}$; the fitted outcome surfaces $\hat g_{S'}$ (posterior means of the regime outcomes); number of folds $L$; trimming level $\epsilon$.
\State Partition $[n]$ into $L$ folds $\{I_\ell\}$.
\For{each fold $\ell$}
  \State On the out-of-fold data, fit each cell propensity $\hat e_{S'}^{(-\ell)}(\bx)=\Pr(\bZ=\mathbf 1_{S'}\mid\bx)$ (random forest, multinomial or GLM) and take the out-of-fold outcome surfaces $\hat g_{S'}^{(-\ell)}$; trim propensities to $[\epsilon,1-\epsilon]$.
  \State For $i\in I_\ell$, form the AIPW cell-mean score $\hat\chi_{S'}(i)=\hat g_{S'}^{(-\ell)}(\bx_i)+\dfrac{\ind\{\bZ_i=\mathbf 1_{S'}\}}{\hat e_{S'}^{(-\ell)}(\bx_i)}\big(y_i-\hat g_{S'}^{(-\ell)}(\bx_i)\big)$.
\EndFor
\State Average the scores to get $\hat\chi_{S'}$ (Eq.~\ref{eq:onestep}); for any effect take the signed Möbius combination $\hat\psi_S=\sum_{S'\subseteq S}(-1)^{|S\setminus S'|}\hat\chi_{S'}$.
\State \textbf{Output:} $\hat\psi_S$ with standard error the empirical SD of its influence function over $\sqrt n$, and the Wald interval.
\end{algorithmic}
\end{algorithm}

Recall $\psi_S=\sum_{S'\subseteq S}(-1)^{|S\setminus S'|}\chi_{S'}$ with cell mean
$\chi_{S'}=\E_{P_X}[g_{S'}(\bx)]=\E[\by(\mathbf 1_{S'})]$ and cell propensity
$e_{S'}(\bx)=\Pr(\bZ=\mathbf 1_{S'}\mid\bx)$. Split $[n]$ into $L$ folds
$\{I_\ell\}$; for each fold fit the nuisances $(\hat g_{S'}^{(-\ell)},\hat
e_{S'}^{(-\ell)})$ on the out-of-fold data and set
\begin{equation}\label{eq:onestep}
\hat\chi_{S'}=\frac1n\sum_{\ell}\sum_{i\in I_\ell}\!\left[\hat g_{S'}^{(-\ell)}(\bx_i)
+\frac{\mathbf 1\{\bZ_i=\mathbf 1_{S'}\}}{\hat e_{S'}^{(-\ell)}(\bx_i)}\big(\by_i-\hat g_{S'}^{(-\ell)}(\bx_i)\big)\right],
\qquad
\hat\psi_S=\sum_{S'\subseteq S}(-1)^{|S\setminus S'|}\hat\chi_{S'}.
\end{equation}
The plug-in $\hat g_{S'}$ is the MVBCF posterior-mean regime surface
$\E[\by(\mathbf 1_{S'})\mid\bx]=\bmu+\sum_{S''\subseteq S'}\btau_{S''}$
(Proposition~\ref{prop:ident}); $\hat e_{S'}$ is any cross-fitted multiclass
learner for the $2^T$ cells.

\begin{assumption}[Cross-fitting rates]\label{as:dml}
$\hat e_{S'}\ge\pi_0/2$; $\|\hat g_{S'}-g_{S'}\|_{P_X}\to_P0$ and
$\|\hat e_{S'}-e_{S'}\|_{P_X}\to_P0$; and the product rate
$\|\hat g_{S'}-g_{S'}\|_{P_X}\,\|\hat e_{S'}-e_{S'}\|_{P_X}=o_P(n^{-1/2})$ for every
$S'\subseteq S$.
\end{assumption}

\begin{theorem}[Efficiency of the debiased average effect]\label{thm:dml}
Under Assumptions~\ref{as:sutva}--\ref{as:pos} and \ref{as:dml}, with cross-fitting,
\[
\sqrt n\,(\hat\psi_S-\psi_{S,0})=\frac1{\sqrt n}\sum_{i=1}^n\tilde\varphi_S(\by_i,\bZ_i,\bx_i)+o_P(1)
\ \Rightarrow\ \N(\mathbf 0,V_S),\qquad V_S=\mathrm{Var}\,\tilde\varphi_S,
\]
the semiparametric efficiency bound of Lemma~\ref{lem:eif}. The plug-in variance
$\hat V_S=\tfrac1n\sum_i\hat{\tilde\varphi}_S(\cdot_i)^{\otimes2}$ is consistent, so
$\hat\psi_S\pm z_{1-\alpha/2}\sqrt{\hat V_S/n}$ is an asymptotically exact
$(1-\alpha)$ interval. The estimator is doubly robust: consistent if either
$\hat g_{S'}$ or $\hat e_{S'}$ is consistent for every $S'\subseteq S$.
\end{theorem}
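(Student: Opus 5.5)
The plan is the standard cross-fitted AIPW argument, applied one cell mean $\chi_{S'}$ at a time and then combined through the fixed signed M\"obius weights. Since $\hat\psi_S=\sum_{S'\subseteq S}(-1)^{|S\setminus S'|}\hat\chi_{S'}$ is a finite linear combination with $T$-dependent coefficients (Lemma~\ref{lem:zeta}), it suffices to prove, for each $S'\subseteq S$,
\[
\sqrt n(\hat\chi_{S'}-\chi_{S'})=\tfrac{1}{\sqrt n}\textstyle\sum_i\varphi_{S'}(\cdot_i)+o_P(1),
\]
where $\varphi_{S'}$ is the bracketed AIPW influence function of Lemma~\ref{lem:eif}. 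Summing these expansions with the signs gives $\tilde\varphi_S$, and the multivariate CLT (finite variance, since $e_{S'}\ge\pi_0$ under Assumption~\ref{as:pos}) yields $\N(\mathbf 0,V_S)$.

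For a fixed cell and fold $\ell$, write $\phi(\cdot;g,e)$ for the score inside \eqref{eq:onestep}. I would decompose $\hat\chi_{S'}-\chi_{S'}$ over folds as
\[
(\mathbb P_{n,\ell}-P)\,\phi(\cdot;g,e)+(\mathbb P_{n,\ell}-P)\big[\phi(\cdot;\hat g,\hat e)-\phi(\cdot;g,e)\big]+P\big[\phi(\cdot;\hat g,\hat e)-\chi_{S'}\big].
\]
The first term is the oracle influence-function average.

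The second, empirical-process term is handled by cross-fitting rather than a Donsker condition. Conditionally on the out-of-fold data the nuisances are fixed, so Chebyshev gives a bound $O_P(n^{-1/2}\|\phi(\cdot;\hat g,\hat e)-\phi(\cdot;g,e)\|_{L_2})$. That norm tends to zero by $L_2$-consistency of $\hat g,\hat e$, using $\hat e\ge\pi_0/2$ (Assumption~\ref{as:dml}) and a bounded conditional second moment of $\by$, which follows from the Gaussian noise of Assumption~\ref{as:noise}.

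The third, bias term is where double robustness enters. Using unconfoundedness and consistency as in Proposition~\ref{prop:ident}, so that $\E[\ind\{\bZ=\mathbf 1_{S'}\}(\by-\hat g)\mid\bx]=e(\bx)(g-\hat g)(\bx)$, it equals
\[
P\big[(\hat g-g)(1-e/\hat e)\big]=P\big[(\hat g-g)(\hat e-e)/\hat e\big].
\]
By Cauchy--Schwarz and $\hat e\ge\pi_0/2$ this is at most $(2/\pi_0)\|\hat g-g\|\,\|\hat e-e\|=o_P(n^{-1/2})$, which is exactly the product-rate condition. The same display gives double robustness: if either $\hat g=g$ or $\hat e=e$ in the limit, the bias vanishes. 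The empirical-process term still tends to zero provided the other nuisance converges to some bounded limit, giving consistency though not necessarily efficiency. Trimming is harmless once $\hat e\ge\pi_0/2$.

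For the variance, I would show $\hat V_S-V_S\to_P0$ by writing $\hat{\tilde\varphi}_S-\tilde\varphi_S$ as a sum of terms controlled by $\|\hat g-g\|$, $\|\hat e-e\|$ and $|\hat\chi-\chi|$, again under the lower propensity bound. The law of large numbers then gives $\tfrac1n\sum\tilde\varphi_S^{\otimes2}\to V_S$, and Slutsky yields the Wald coverage.

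I expect the main obstacle to be the second, cross-fitting step. The plug-in $\hat g$ is the MVBCF posterior mean, so it is fit out-of-fold and independence from fold $\ell$ holds. I would state explicitly that $\hat g^{(-\ell)}$ is refit on out-of-fold data, as Algorithm~\ref{alg:debias} prescribes, and verify the conditional-Chebyshev bound under that independence.
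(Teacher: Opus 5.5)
Your proposal is correct and follows the same route as the paper's sketch. Both use per-cell cross-fitted AIPW scores, rely on cross-fitting to remove the empirical-process term without entropy conditions, and use Neyman orthogonality to reduce the bias to $P[(\hat g-g)(\hat e-e)/\hat e]$ and hence to the product rate; the fixed M\"obius combination, the CLT, plug-in variance consistency, and double robustness read off the same bias display then finish the argument.

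You carry out by hand the decomposition the paper delegates to \citet[Thm.~3.1--3.2]{chernozhukov2018}. You also usefully make explicit two points the statement leaves implicit: a bounded conditional second moment of $\by$, which you draw from Assumption~\ref{as:noise} even though it is not among the theorem's listed hypotheses; and the requirement that $\hat g_{S'}^{(-\ell)}$ genuinely be refit out of fold, rather than taken from the single full-sample fit that the implementation note for \pkg{debiased\_ate()} appears to use.
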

\begin{proof}[Proof sketch]
Each $\hat\chi_{S'}$ in \eqref{eq:onestep} is the cross-fitted augmented-IPW/one-step
estimator of $\E[\by(\mathbf 1_{S'})]$, which is Neyman-orthogonal with influence
function the bracket in Lemma~\ref{lem:eif}. Cross-fitting makes the empirical-process
remainder $o_P(n^{-1/2})$ with no entropy condition, and orthogonality bounds the
second-order remainder by $\sum_{S'}\|\hat g_{S'}-g_{S'}\|_{P_X}\|\hat e_{S'}-e_{S'}\|_{P_X}=o_P(n^{-1/2})$
by Assumption~\ref{as:dml}; hence $\hat\chi_{S'}$ is asymptotically linear and
efficient \citep[Thm.~3.1--3.2]{chernozhukov2018}, see also \citet{kennedy2023}. As
$\hat\psi_S$ is the fixed M\"obius combination of the $\hat\chi_{S'}$, it is
asymptotically linear with $\tilde\varphi_S=\sum_{S'\subseteq S}(-1)^{|S\setminus S'|}\tilde\varphi_{S'}$,
giving the central limit theorem (CLT) and efficiency; double robustness is inherited from each AIPW term.
Consistency of $\hat V_S$ follows from consistency of the plug-in influence function.
\end{proof}

\begin{corollary}[The BvM assumption is met by construction]\label{cor:closed}
The MVBCF component rate is $\|\hat g_{S'}-g_{S'}\|=O_P(\varepsilon_{n,S})$
(Theorem~\ref{thm:contract}), which is $o_P(n^{-1/4})$ whenever $2\alpha_S\ge d$
(moderate smoothness relative to dimension), and a cross-fitted propensity learner
attains $o_P(n^{-1/4})$ under the same kind of condition; their product is then
$o_P(n^{-1/2})$, so Assumption~\ref{as:dml} and hence the no-bias condition of
Proposition~\ref{prop:bvm} hold without assumption. The debiased posterior
summary --- the MVBCF surfaces corrected by \eqref{eq:onestep} --- therefore delivers
$\sqrt n$-efficient, calibrated average effects even where the raw plug-in posterior
would be biased, while the unit-level surfaces remain the exploratory,
nonparametric-rate object. When $2\alpha_S<d$ one restores the product rate by
undersmoothing the propensity (or the outcome) learner; the estimator remains
$\sqrt n$-valid as long as the product is $o_P(n^{-1/2})$.
\end{corollary}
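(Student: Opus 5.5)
The plan is to verify the hypotheses of Assumption~\ref{as:dml} directly and then invoke Theorem~\ref{thm:dml} and Proposition~\ref{prop:bvm}. The argument has three steps: an outcome-side rate, a propensity-side rate, and their product.

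\textbf{Outcome side.} The plug-in regime surface is $\hat g_{S'}=\bmu+\sum_{S''\subseteq S'}\btau_{S''}$, evaluated at the posterior mean. It is the image of the component vector under the zeta matrix $A$ of Lemma~\ref{lem:zeta}. Theorem~\ref{thm:contract} gives contraction of the full posterior at $\varepsilon_n$ in $\|\cdot\|_n$. I would pass from posterior contraction to a rate for the posterior mean by the standard argument: split the posterior mean into the integral over the contraction ball, which is bounded by $M_n\varepsilon_n$, plus the complement. The complement is controlled by the exponentially small remaining posterior mass together with the coefficient bound $n$ on the sieve $\mathcal F_n$ from the appendix. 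I would then convert $\|\cdot\|_n$ to $\|\cdot\|_{P_X}$ through a uniform empirical-process comparison over the sieve, using the entropy bound (E), with the covariate density bounded above and below (Assumption~\ref{as:noise}). Under cross-fitting this step is simpler, since the out-of-fold fit is independent of the evaluation fold. The result is $\|\hat g_{S'}-g_{S'}\|_{P_X}=O_P(M_n\varepsilon_n)$ with constants depending only on $T$.

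\textbf{Rate arithmetic.} I would then check that $\varepsilon_{n,S}=n^{-\alpha_S/(2\alpha_S+d)}(\log n)^\kappa$ is $o(n^{-1/4})$ when $2\alpha_S\ge d$. The exponent satisfies $\alpha_S/(2\alpha_S+d)\ge 1/4$ exactly when $2\alpha_S\ge d$. In the boundary case $2\alpha_S=d$ the exponent equals $1/4$ and the logarithmic factor breaks the claim. I would therefore either use a strict inequality or absorb the logarithm by taking $M_n$ slowly growing, and note this explicitly.

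\textbf{Propensity side.} For the propensity I would assume, as the corollary does, a cross-fitted learner with rate $o_P(n^{-1/4})$. A natural candidate is a nonparametric multinomial learner on a Hölder-smooth $e_{S'}$ under the analogous smoothness condition. After trimming at $\pi_0/2$ the bound $\hat e_{S'}\ge\pi_0/2$ holds; this is legitimate because Assumption~\ref{as:pos} gives $e_{S'}\ge\pi_0$, so trimming only reduces the error. The product of the two rates is then $o_P(n^{-1/2})$ for each of the finitely many $S'\subseteq S$, which is Assumption~\ref{as:dml}. Theorem~\ref{thm:dml} then delivers $\sqrt n$-efficiency and interval calibration. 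For the no-bias condition of Proposition~\ref{prop:bvm}, I would note that $\varepsilon_{n,S}\varepsilon_{n,e}=o(n^{-1/2})$ is the same product condition.

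\textbf{Final clause and main obstacle.} For the case $2\alpha_S<d$, I would observe that only the product matters: undersmoothing one learner so that its rate compensates the other suffices, by the same theorem. The main obstacle is the outcome step, passing from contraction of the posterior measure in the empirical norm to an $L_2(P_X)$ rate of the posterior-mean point estimator. Done carefully, this needs uniform integrability of the posterior tails and a norm-equivalence argument. The phrase ``without assumption'' also overstates things, since the propensity rate is itself an assumption, and I would make that explicit.
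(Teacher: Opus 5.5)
You follow the same chain as the paper. The corollary has no separate proof; it carries its justification inline: outcome rate from Theorem~\ref{thm:contract}, an assumed $o_P(n^{-1/4})$ propensity rate, the product, then Theorem~\ref{thm:dml}. You are right that ``without assumption'' overstates the result, and right that the passage from posterior contraction to a rate for the posterior mean is left implicit.

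\textbf{Rate bookkeeping and the last clause.} The rate bookkeeping does not close. Theorem~\ref{thm:contract} gives every component only the joint rate $\varepsilon_n=\max_{S''}\varepsilon_{n,S''}$, which is what your outcome step correctly produces. In any case $g_{S'}=\bmu+\sum_{\varnothing\neq S''\subseteq S'}\btau_{S''}$ contains the prognostic surface. Your arithmetic then checks $\varepsilon_{n,S}$ instead, as does the paper's ``$O_P(\varepsilon_{n,S})$''. A condition on the target effect's $\alpha_S$ alone therefore says nothing about $\varepsilon_n$. You need $2\alpha_{S''}>d$ for the least-smooth component, which with the theorem as stated means all $K$ components, $\bmu$ included.

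At the boundary you rightly note that $n^{-1/4}(\log n)^\kappa$ is not $o(n^{-1/4})$. But a slowly growing $M_n$ cannot absorb the logarithm: $M_n\to\infty$ multiplies the rate, and the most the contraction statement gives the posterior mean is $O_P(\varepsilon_n)$. Either use the strict inequality, or require the propensity learner to attain $o_P(n^{-1/4}(\log n)^{-\kappa})$.

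Your step for the case $2\alpha_S<d$ also fails. Undersmoothing can only worsen a learner's $L_2(P_X)$ rate, which over a H\"older ball cannot beat the minimax $n^{-\alpha/(2\alpha+d)}$. So it cannot restore the product of $L_2$ rates in Assumption~\ref{as:dml} ``by the same theorem''. What undersmoothing buys is smaller bias in a finer analysis of the bilinear remainder, and Theorem~\ref{thm:dml} contains no such analysis. Within that theorem, compensation needs a genuinely faster nuisance, e.g.\ known propensities in a randomised factorial.

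\textbf{The norm conversion.} The step you single out as the main obstacle does not go through with the tool you name. Standard ratio-type bounds comparing $\|\cdot\|_n$ with $\|\cdot\|_{P_X}$ over forests with $L_n\asymp n\varepsilon_n^2/\log n$ leaves carry an additive term of order $B^2L_n\log n/n\asymp B^2\varepsilon_n^2$ for an envelope $B$. They therefore need $B=O(1)$. The sieve $\mathcal F_n$ bounds coefficients only by $n$. Under Gaussian leaf tails, \eqref{eq:R} cannot be met with a bound growing slower than $\sqrt n\,\varepsilon_n$, which leaves a term of order $n\varepsilon_n^4\gg\varepsilon_n^2$. You need one of two things:
\begin{itemize}
\item a random-design contraction theorem with tests in $L_2(P_X)$; or
\item separate sup-norm control of the posterior draws, followed by Jensen's inequality draw by draw (the posterior mean is not itself in $\mathcal F_n$).
\end{itemize}

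Cross-fitting does not help here. Independence of the evaluation fold lets an $L_2(P_X)$ rate control that fold's remainder in Theorem~\ref{thm:dml}. It does not turn empirical-norm contraction on the training design into such a rate.

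Finally, in your ball/complement split the coefficient bound holds only on $\mathcal F_n$, and Theorem~\ref{thm:contract} controls the complement only in probability. Off the sieve you need a posterior moment bound, e.g.\ Fubini against the evidence lower bound implied by \eqref{eq:P}. For that, \eqref{eq:R} must be strengthened enough to survive the Cauchy--Schwarz square root.
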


\noindent Implementation. Equation~\eqref{eq:onestep} is implemented as
\pkg{debiased\_ate()} in the accompanying package: it takes a fitted model, cross-fits
the cell propensities, applies the correction to the posterior-mean regime surfaces,
and returns each main, interaction and joint average effect with its efficient
standard error and Wald interval, alongside the raw posterior summary for comparison.

\subsection{Scope and remaining open problems}
The theory is complete for the paper's primary claims: identification
(Proposition~\ref{prop:ident}), average-effect posterior contraction
(Theorem~\ref{thm:contract}) and, through the cross-fitted debiased estimator of
\S\ref{app:debias}, semiparametric-efficient and calibrated average effects. In
particular the no-bias condition of the Bernstein--von Mises result
(Proposition~\ref{prop:bvm}), assumed in its statement, is
discharged by construction in Corollary~\ref{cor:closed}: cross-fitting
drives the plug-in bias to $o_P(n^{-1/2})$, so no undersmoothing schedule is
needed for the average effect, and \pkg{debiased\_ate()} implements exactly this
step in the released package. Three items remain open and delimit the
scope. (a) The unit-level (CATE) surfaces keep the nonparametric rate and their
credible intervals under-cover in finite samples (a documented BART/BCF
behaviour); calibrated per-unit inference by conformal or cross-fitted correction
is future work, which is why the average effect is the calibrated claim.
(b) The tree-approximation constants in (P)/(E) depend on the covariate dimension
$d$ and degrade with it; the sparsity-adaptive soft-tree prior
\citep{linero2018soft} controls this when the true components depend on few
coordinates. (c) All statements are for fixed $T$ (fixed $K=2^T$); the order-$r$
truncation covers moderate $T$, but a regime with $T\to\infty$ needs a sparsity
prior over the interaction lattice and is left open. None of these affects
identification or the average-effect contraction, the paper's primary calibrated
claims.

\subsection{Beyond Gaussian outcomes: the same kernel on a latent scale}\label{app:nongaussian}
The model as developed assumes a multivariate Gaussian likelihood
(Assumption~\ref{as:noise}) and thus targets continuous outcomes --- the
regime of the NHANES application, whose endpoints (blood pressures, HbA1c,
cholesterol, cognitive scores) are continuous. The architecture is deliberately
built so that non-Gaussian endpoints reuse the same indicator-weighted
conjugate kernel (Eq.~\ref{eq:leaf}, Proposition~\ref{prop:reduce}) on a latent
Gaussian scale; only the map from the outcome to that scale changes, so the
factorial ANOVA structure, the heredity shrinkage and the software are inherited
unchanged.

Binary and ordinal outcomes. Introduce latent utilities $\by_i^\ast$ with
$y_{ij}=\mathbf 1\{y_{ij}^\ast>0\}$ (probit). The data-augmentation step of
\citet{albert1993} draws $\by_i^\ast$ from truncated Gaussians, after which the
factorial model \eqref{eq:model} holds exactly for $\by_i^\ast$ and every
leaf update \eqref{eq:leaf} is unchanged; correlated binary endpoints become a
multivariate probit whose latent correlation is the same $\Sig$ the sampler
already draws. Logistic links use the P\'olya--Gamma augmentation of
\citet{polson2013}, which likewise renders the conditional model Gaussian in the
leaf parameters.

Counts. Poisson or negative-binomial endpoints admit a log-link with a
latent-Gaussian working response, or, for the negative binomial, the
P\'olya--Gamma representation of \citet{polson2013}; general likelihoods within
BART are treated by \citet{tan2019general}.

Survival with censoring. An accelerated-failure-time (AFT) formulation on
$\log$-time with Gaussian (log-normal) errors imputes censored times within the
sampler, as in the AFT-BART tradition \citep{chu2023riaftbart,sparapani2021},
again leaving the indicator-weighted kernel intact.

In every case the M\"obius identification (Proposition~\ref{prop:ident}) is
unchanged --- it is a statement about conditional means on the latent scale ---
and the contraction and Bernstein--von Mises arguments carry over under the
corresponding latent-smoothness conditions, with the efficient influence function
of \S\ref{app:debias} replaced by the one for the relevant link. Implementing and
benchmarking these likelihoods is left to future work; the present contribution is
the factorial multivariate model, its theory, and its deep application for
continuous outcomes.

\section{Appendix: data description and reproducibility}\label{app:data}
Every dataset in the paper is public and every result is regenerated by the
accompanying \textsf{R} package and scripts. Table~\ref{tab:data} summarizes the
five real datasets; the definitions below give the exact treatments, outcomes and
adjustment sets used.

\begin{table}[H]\centering\small
\caption{The real datasets: source, size, crossed binary treatments, correlated
outcomes ($q$), and the \textsf{R} source. All are openly available.}
\label{tab:data}
\begin{tabular}{@{}p{3.0cm}p{3.0cm}rp{2.6cm}p{2.9cm}l@{}}
\toprule
Dataset & Source & $n$ & Treatments & Outcomes ($q$) & Access\\
\midrule
NHANES cardiometabolic & \citet{nhanes} & 9{,}002 & statin use, physical activity & SBP, DBP, HbA1c, total cholesterol ($q{=}4$) & \pkg{nhanesA}\\
NHANES cognitive & \citet{nhanes} & 2{,}640 & antihypertensive use, physical activity & CERAD immediate \& delayed, fluency, digit-symbol ($q{=}4$) & \pkg{nhanesA}\\
Breast cancer & \citet{royston2013rotterdam} & 1{,}500 & hormonal, chemo & log recurrence-free \& log overall survival ($q{=}2$) & \pkg{survival}\\
Agricultural factorial & \citet{yates1937factorial} & 24 & N, P, K & pea yield ($q{=}1$) & \pkg{datasets}\\
Hedonic housing & \citet{anglin1996hedonic} & 546 & air conditioning, preferred location & log sale price ($q{=}1$) & \pkg{AER}\\
\bottomrule
\end{tabular}
\end{table}

\emph{NHANES (deep application).} We pool the 2011--2012 and 2013--2014 cycles of the
U.S.\ National Health and Nutrition Examination Survey \citep{nhanes}, retrieved with
the \pkg{nhanesA} package \citep{nhanesA}. The cardiometabolic analysis takes adults
aged $\ge 20$ ($n=9{,}002$) and crosses statin use (from the prescription-medication
file) with physical activity (any moderate or vigorous work or recreation), with four
jointly-modelled outcomes: systolic and diastolic blood pressure, glycated
haemoglobin (HbA1c) and total cholesterol. The cognitive analysis takes adults aged
$\ge 60$ ($n=2{,}640$), who complete the NHANES cognitive battery, and crosses
antihypertensive-medication use with physical activity, with four jointly-modelled
scores: CERAD immediate word recall (three learning trials), CERAD delayed recall,
Animal Fluency, and the Digit-Symbol Substitution Test. Both analyses adjust for the
same measured confounders --- age, sex, education, race/ethnicity, family
income-to-poverty ratio, body-mass index, diabetes and smoking --- and the model
additionally carries a propensity control for each exposure. Outcomes are
standardized to SD units, with natural units reported in the text.

\emph{Breast cancer (\texttt{rotterdam}).} The Rotterdam tumour-bank cohort of
\citet{royston2013rotterdam}, distributed in the \pkg{survival} package \citep{therneau2024survival} (a random
$n=1{,}500$ subsample). The two crossed treatments are adjuvant hormonal therapy and
chemotherapy; the two correlated outcomes are log recurrence-free and log overall
survival time; the adjustment set is age, menopausal status, tumour-size class,
grade, number of positive nodes, and progesterone and oestrogen receptor levels. The
contrast is with an unadjusted linear model, which is subject to confounding by
indication.

\emph{Agricultural factorial (\texttt{npk}).} The classical $2^3$ N/P/K pea-yield
factorial in six blocks ($n=24$ plots), of the type introduced by
\citet{yates1937factorial} and distributed in the base-\textsf{R} \pkg{datasets} package \citep{rcoreteam}. The three
treatments are nitrogen, phosphate and potassium; the outcome is yield; block
indicators are the only covariates. The benchmark is the textbook analysis of
variance.

\emph{Hedonic housing (\texttt{HousePrices}).} The Windsor housing data of
\citet{anglin1996hedonic}, distributed in the \pkg{AER} package \citep{kleiber2008aer} ($n=546$). The two
crossed treatments are air conditioning and a preferred location; the outcome is log
sale price; the hedonic adjustment set is log lot size, bedrooms, bathrooms, stories,
garage places, and driveway, recreation-room, full-basement and gas-heating
indicators. The benchmark is the semiparametric hedonic regression of
\citet{anglin1996hedonic}.

\emph{Simulation: the data-generating process.} The synthetic studies are generated
by \texttt{simulate\_multi()}, with the following explicit recipe. Covariates
$x_1,\dots,x_p$ (the covariate dimension is $d=p$) are drawn independently $\N(0,1)$. Each treatment $t=1,\dots,T$ is
assigned by a confounded logistic model,
\[
\Pr(Z_t=1\mid\bx)=\operatorname{logit}^{-1}\!\Big(0.6\,x_1\,\ind\{t\text{ odd}\}
   -0.5\,x_2\,\ind\{t\text{ even}\}-0.1\,t\Big),
\]
so the first two covariates confound assignment and overlap tightens with $t$; the
unconfounded variant drops the $x_1,x_2$ terms. The prognostic surface is
$\mu(\bx)=2+1.5x_1-0.4x_2^2+0.8x_3$. The main effect of treatment $t$ is
\[
\tau_t(\bx)=0.8(1+t)+1.2\,x_{a(t)}\,\ind\{t\text{ odd}\}+0.9\,\ind\{x_{b(t)}>0\},
\qquad a(t)=((t-1)\bmod p)+1,\ b(t)=(t\bmod p)+1,
\]
and there are two genuine pairwise interactions, $\tau_{12}(\bx)=1.4\,\ind\{x_1>0\}\,
\ind\{x_2>0\}$ and $\tau_{23}(\bx)=-1.1\,x_3$ (the latter only when $T\ge3$). For
outcome $k=1,\dots,q$ the mean is the shared structure scaled by $s_k$, with
$s_k=1+0.6(k-1)$,
\[
\E[y_{ik}\mid\bx_i,\bZ_i]=s_k\Big(\mu(\bx_i)+\textstyle\sum_{t}Z_{it}\tau_t(\bx_i)
   +Z_{i1}Z_{i2}\tau_{12}(\bx_i)+Z_{i2}Z_{i3}\tau_{23}(\bx_i)\Big),
\]
and the $q$ outcomes share a compound-symmetry error covariance
$\Sig=0.5\,I_q+0.3\,\ind\ind^\top$ (variances $0.8$, covariances $0.3$, cross-outcome
correlation $0.375$), so the responses are genuinely coupled. The true surfaces are
returned so recovery is scored against the truth. The Monte Carlo validation uses
$n=700$ over $R=40$ replications; the consistency sweep spans $n\in[300,3000]$; the
misspecification battery uses $n\approx1200$ over $R=12$; and the comparison study
uses the two-treatment case $T=2$. Seeds and all settings are fixed in the
accompanying scripts. The recovery and value-suppressing illustrations
(Figures~\ref{fig:recovery} and \ref{fig:vsup}) instead use a multi-environment-trial
generator, \texttt{simulate\_met2()}, which places the same lattice of prognostic,
main and interaction surfaces on a genotype-by-environment grid with genotype
tolerance/responsiveness and environment random effects; it is documented in the
accompanying package.

\end{document}